\def\version{
February 23, 2018
}
\providecommand{\emph}[1]{{\it #1}}
\renewcommand{\emph}[1]{{\it #1}}
\definecolor{MyDarkBlue}{rgb}{0,0.08,0.45}
\definecolor{Pomegranade}{rgb}{0.6,0.1,0.15}
\definecolor{purple}{rgb}{0.6,0.1,0.15}
\providecommand{\url}[1]{\small\textcolor{blue}{#1}}
\providecommand{\eprint}[1]{}
\renewcommand{\eprint}[1]{arXiv:\href{http://arxiv.org/abs/#1}{#1}}
\providecommand{\eqref}[1]{{\rm (\ref{#1})}}
\DeclareSymbolFont{AMSb}{U}{msb}{m}{n}
\DeclareSymbolFontAlphabet{\mathbb}{AMSb}
\DeclareSymbolFont{EUR}{U}{eur}{m}{n}
\DeclareSymbolFontAlphabet{\eur}{EUR}
\DeclareSymbolFont{EUB}{U}{eur}{b}{n}
\DeclareSymbolFontAlphabet{\eub}{EUB}
\newcommand{\jj}{\mathrm{i}}
\newcommand{\End}{\,{\rm End}\,}
\newcommand{\bmK}{\bm{K}}
\newcommand{\e}{\bm{e}}
\newcommand{\notyet}[1]{{}}
\newcommand{\p}{\partial}
\newcommand{\at}[1]{\vert\sb{\sb{#1}}}
\def\R{\mathbb{R}}
\newcommand{\C}{\mathbb{C}}
\newcommand{\N}{\mathbb{N}}
\newcommand{\abs}[1]{\vert #1 \vert}
\newcommand{\norm}[1]{\Vert #1 \Vert}
\newcommand{\sothat}{\,\,{\rm ;}\ \ }
\newcommand{\ac}[1]{\noindent\textcolor{red}
{{\rm [\![}\mbox{\sc{AC}$\blacktriangleright\!\!\blacktriangleright$}: { #1}{\rm ]\!]}}}
\DeclareMathSymbol{\varGamma}{\mathord}{letters}{"00}
\DeclareMathSymbol{\varDelta}{\mathord}{letters}{"01}
\DeclareMathSymbol{\varTheta}{\mathord}{letters}{"02}
\DeclareMathSymbol{\varLambda}{\mathord}{letters}{"03}
\DeclareMathSymbol{\varXi}{\mathord}{letters}{"04}
\DeclareMathSymbol{\varPi}{\mathord}{letters}{"05}
\DeclareMathSymbol{\varSigma}{\mathord}{letters}{"06}
\DeclareMathSymbol{\varUpsilon}{\mathord}{letters}{"07}
\DeclareMathSymbol{\varPhi}{\mathord}{letters}{"08}
\DeclareMathSymbol{\varPsi}{\mathord}{letters}{"09}
\DeclareMathSymbol{\varOmega}{\mathord}{letters}{"0A}
\theoremstyle{plain}
\newtheorem{lemma}{Lemma}[section]
\newtheorem{theorem}{Theorem}[section]
\newtheorem{corollary}[lemma]{Corollary}
\newtheorem{proposition}[lemma]{Proposition}
\theoremstyle{definition}
\newtheorem{definition}[lemma]{Definition}
\theoremstyle{remark}
\newtheorem{remark}[lemma]{Remark}
\newtheorem{example}[lemma]{Example}
\newcounter{step}
\makeatletter\@addtoreset{equation}{section}
\makeatletter\@addtoreset{lemma}{section}
\makeatletter\@addtoreset{theorem}{section}
\newcommand{\Eta}{\mathrm{H}}
\def\Chi{X}
\renewcommand{\Re}{\mathop{\rm{R\hskip -1pt e}}\nolimits}
\renewcommand{\Im}{\mathop{\rm{I\hskip -1pt m}}\nolimits}
\begin{document}

\title{Spectral stability of
bi-frequency solitary waves
in Soler and Dirac--Klein--Gordon models}

\author{
{\sc Nabile Boussa{\"\i}d}
\\
{\it\small Universit\'e de Franche-Comt\'e, 25030 Besan\c{c}on CEDEX, France}
\\~\\
{\sc Andrew Comech}
\\
{\it\small Texas A\&M University, College Station, Texas 77843, USA}
\\
{\it\small IITP, Moscow 127051, Russia}
\\
{\it\small St. Petersburg State University, St. Petersburg 199178, Russia}
}

\date{\version}

\maketitle

\begin{abstract}
We construct bi-frequency solitary waves
of the nonlinear Dirac equation
with the scalar self-interaction,
known as the Soler model
(with an arbitrary nonlinearity and in arbitrary dimension)
and the Dirac--Klein--Gordon with Yukawa self-interaction.
These solitary waves provide a natural implementation
of qubit and qudit states in the theory of quantum computing.

We show the relation of
$\pm 2\omega\mathrm{i}$ eigenvalues
of the linearization at a solitary wave,
Bogoliubov $\mathbf{SU}(1,1)$ symmetry,
and the existence of bi-frequency solitary waves.
We show that the spectral stability of these waves
reduces to spectral stability of usual (one-frequency) solitary waves.
\end{abstract}

\bigskip

\hfill
{\it To Vladimir Georgiev on the occasion of his 60th birthday}

\bigskip

\section{Introduction}
The Soler model \cite{jetp.8.260-short,PhysRevD.1.2766}
is the nonlinear Dirac equation
with the minimal scalar self-coupling,
\begin{eqnarray}\label{nld}
\jj\p\sb t\psi
=D_m\psi-f(\psi\sp\ast\beta\psi)\beta\psi,
\qquad
x\in\R^n,
\qquad
\psi(t,x)\in\C^N,
\end{eqnarray}
where $f$ is a continuous real-valued function with $f(0)=0$.
Above,
$\bar\psi=\psi\sp\ast\beta$,
with $\psi\sp\ast$ the hermitian conjugate.
This is one of the main models
of the nonlinear Dirac equation,
alongside with its own one-dimensional analogue,
the Gross--Neveu model
\cite{PhysRevD.10.3235,PhysRevD.12.3880},
and with the massive Thirring model \cite{MR0091788}.
Above, the Dirac operator is given by
\[
D_m=-\jj\bm\alpha\cdot\nabla+m\beta,
\qquad
m>0,
\]
with $\alpha^j$ ($1\le j\le n$) and $\beta$
mutually anticommuting
self-adjoint matrices
such that
$D_m^2=-\Delta+m^2$.
All these models are hamiltonian,
$\mathbf{U}(1)$-invariant, and relativistically invariant.
The classical field $\psi$ could be quantized
(see e.g. \cite{PhysRevD.12.3880}).

The Soler model shares the symmetry features
with its more physically relevant counterpart,
Dirac--Klein--Gordon system
(the Dirac equation with the Yukawa self-interaction,
which is also based on the quantity
$\bar\psi\psi$):
\begin{eqnarray}\label{dkg}
\jj\p_t\psi=D_m\psi-\Phi\beta\psi,
\qquad
(\p_t^2-\Delta+M^2)\Phi=\psi\sp\ast\beta\psi,
\qquad
x\in\R^n,
\end{eqnarray}
where
$M>0$ is the mass of the scalar field $\Phi(t,x)\in\R$.

The solitary wave solutions
in the Soler model
(already constructed in \cite{PhysRevD.1.2766})
possess certain stability properties \cite{MR2892774,PhysRevLett.116.214101,linear-b};
in particular, small amplitude solitary waves
corresponding to the nonrelativistic limit $\omega\lesssim m$
of the charge-subcritical and charge-critical
case $f(\tau)=\abs{\tau}^k$ with $k\lesssim 2/n$
are \emph{spectrally stable}:
the linearized equation on the small perturbation
of a particular solitary wave
has no exponentially growing modes.
The opposite situation,
the \emph{linear instability} of small amplitude solitary waves
(presence of exponentially growing modes)
in the charge-supercritical case $k>2/n$ 
was considered in \cite{MR3208458}.
Recent results on asymptotic stability
of solitary waves in the nonlinear Dirac equation
\cite{MR2924465,MR2985264,MR3592683}
rely on the assumptions on the spectrum of the
linearization at solitary waves,
although this information is not readily available,
especially in dimensions above one.
This stimulates the study of the spectra of linearizations
at solitary waves.
It was shown in \cite{MR2892774}
that the Soler model in one spatial dimension
linearized at a solitary wave
$\phi(x)e^{-\jj\omega t}$
has eigenvalues
$\pm 2\omega\jj$.
While the zero eigenvalues correspond to
symmetries of the system (unitary, translational, etc.),
the eigenvalues $\pm 2\omega\jj$
are related to
the presence of bi-frequency solitary waves
and to the Bogoliubov $\mathbf{SU}(1,1)$ symmetry
of the Soler model and Dirac--Klein--Gordon models,
first noticed by Galindo in \cite{MR0503135}.
In the three-dimensional case
($n=3$, $N=4$)
with the standard choice of the Dirac matrices,
this symmetry group takes the form
\begin{eqnarray}\label{def-g-b}
G\sb{\mathrm{Bogoliubov}}
=
\big\{a-\jj b\gamma^2\bmK\sothat
a,\,b\in\C,\ \abs{a}^2-\abs{b}^2=1\big\}\cong\mathbf{SU}(1,1),
\end{eqnarray}
where $\bmK:\C^N\to\C^N$ is the antilinear operator of complex conjugation;
the group isomorphism is given by
$a-\jj b\gamma^2\bmK\mapsto
\footnotesize\begin{bmatrix}a&b\\\bar b&\bar a\end{bmatrix}
$.
By the Noether theorem,
the continuous symmetry group leads to conservation laws
(see Section~\ref{sect-bogoliubov}).
The Bogoliubov group,
when applied to standard solitary waves $\phi(x)e^{-\jj\omega t}$
in the form of the Wakano Ansatz \cite{wakano-1966},
\begin{eqnarray}\label{wakano}
\phi(x)
=\begin{bmatrix}v(r,\omega)\xi\\\jj u(r,\omega)\frac{x\cdot\sigma}{\abs{x}}\xi\end{bmatrix},
\qquad
\xi\in\C^2,
\quad
\abs{\xi}=1,
\end{eqnarray}
produces bi-frequency solitary waves of the form
\begin{eqnarray}\label{qubits}
a\phi(x)e^{-\jj\omega t}+b\phi\sb{C}(x)e^{\jj\omega t},
\qquad
a,\,b\in\C,
\quad
\abs{a}^2-\abs{b}^2=1,
\end{eqnarray}
with $\phi\sb{C}=-\jj\gamma^2\bmK\phi$ the charge conjugate of $\phi$;
here
$-\jj\gamma^2\bmK$ is one of the infinitesimal generators of
$\mathbf{SU}(1,1)$.
Above, $v(r,\omega)$ and $u(r,\omega)$ are real-valued functions
which satisfy
\begin{eqnarray}\label{def-v-u}
\begin{cases}
\omega v=\p_r u+\frac{n-1}{r}u+(m-f)v,
\\
\omega u=-\p_r v-(m-f)u,
\end{cases}
\quad
\lim\sb{r\to 0}u(r,\omega)=0,
\end{eqnarray}
with 
\begin{eqnarray}\label{f-v-u}
f=f(v^2-u^2)
\end{eqnarray}
in the case of nonlinear Dirac equation \eqref{nld}
(see e.g.
\cite{MR3670258})
and
\begin{eqnarray}\label{f-not-v-u}
f=(-\Delta+M^2)^{-1}(v^2-u^2)
\end{eqnarray}
in the case of Dirac--Klein--Gordon system \eqref{dkg}.
We assume that
the functions $u(r,\omega)$ and $v(r,\omega)$
satisfy
\begin{eqnarray}\label{u-less-v}
\sup\sb{r\ge 0}\abs{u(r,\omega)/v(r,\omega)}<1,
\end{eqnarray}
which is true in particular for small amplitude solitary waves
with $\omega\lesssim m$;
see e.g. \cite{MR3670258}.

\bigskip

We now switch to the general case of a general spatial dimension $n\ge 1$
and a general number of spinor components $N\ge 2$.
By usual arguments
(see e.g. \cite{MR3670258}),
without loss of generality,
we may assume that the Dirac matrices have the form
\begin{eqnarray}\label{general-alpha}
\alpha^j
=\begin{bmatrix}
0&\sigma_j\sp\ast\\\sigma_j&0
\end{bmatrix},
\qquad
1\le j\le n;
\qquad
\beta=\begin{bmatrix}1\sb{N/2}&0\\0&-1\sb{N/2}\end{bmatrix}.
\end{eqnarray}
Here $\sigma\sb j$, $1\le j\le n$,
are
$\frac N 2\times\frac N 2$ matrices
which are the higher-dimensional analogue Pauli matrices:
\begin{eqnarray}\label{general-sigma}
\{\sigma\sb j,\sigma\sb k\sp\ast\}=2\delta\sb{j k}1\sb{N/2},
\qquad
1\le j,\,k\le n.
\end{eqnarray}
In the case $n=3$, $N=4$,
one takes $\sigma_j$, $1\le j\le 3$, to be the standard Pauli matrices.

\begin{remark}\label{remark-n-4}
In general, $\sigma_j$ are not necessarily self-adjoint;
for example,
for $n=4$ and $N=4$,
one can choose
$\sigma_j$ to be the standard Pauli matrices
for $1\le j\le 3$
and set
$\sigma_4=\jj I_2$.
\end{remark}

We denote
\begin{eqnarray}\label{def-sigma-r}
\sigma_r=\frac{x\cdot\sigma}{r},
\qquad
x\in\R^n\setminus\{0\},
\qquad
r=\abs{x}.
\end{eqnarray}
In Section~\ref{sect-bi-frequency},
we show that
if $\phi(x)e^{-\jj\omega t}$,
with $\phi$ from \eqref{wakano},
is a solitary wave solution
to \eqref{nld} (or \eqref{dkg}),
then
there is the following family of exact solutions
to \eqref{nld} (or \eqref{dkg}, respectively):
\begin{eqnarray}\label{qudits}
&\theta_{\Xi,\Eta}(t,x)
=\abs{\Xi}\phi\sb\xi(x)e^{-\jj\omega t}+\abs{\Eta}\chi\sb\eta(x)e^{\jj\omega t},
\\[1ex]
\nonumber
&
\phi\sb\xi(x)
=\begin{bmatrix}v(r,\omega)\xi\\\jj u(r,\omega)\sigma_r\xi\end{bmatrix},
\qquad
\chi\sb\eta(x)
=\begin{bmatrix}-\jj u(r,\omega)\sigma_r\sp\ast\eta\\v(r,\omega)\eta\end{bmatrix}
,
\end{eqnarray}
with
$
\Xi,\,\Eta\in\C^{N/2}\setminus\{0\},
$
$
\abs{\Xi}^2-\abs{\Eta}^2=1,
$
$
\xi=\Xi/\abs{\Xi},
$
$
\eta=\Eta/\abs{\Eta}$.
See Lemma~\ref{lemma-bi} below.
This shows that in any dimension
there is a larger symmetry group,
$\mathbf{SU}(N/2,N/2)$,
which is present at the level of bi-frequency solitary wave solutions
in the models \eqref{nld} and \eqref{dkg}
while being absent at the level of the Lagrangian.

\begin{remark}
We note that if $f$ in \eqref{nld} is even,
then
$\theta_{\Xi,\Eta}(t,x)$
given by \eqref{qudits}
with
$\Xi,\,\Eta\in\C^{N/2}$
such that
$\abs{\Xi}^2-\abs{\Eta}^2=-1$
is also a solitary wave solution.
\end{remark}

Two-frequency solitary waves \eqref{qudits}
clarify the nature of the eigenvalues $\pm 2\omega\jj$
of the linearization at (one-frequency) solitary waves
in the Soler model:
these eigenvalues could be interpreted
as corresponding to the tangent vectors
to the manifold of bi-frequency solitary waves.
See Corollary~\ref{corollary-bi} below.
We point out that the exact knowledge of the presence of $\pm 2\omega\jj$
eigenvalues
in the spectrum of the linearization at a solitary wave
is important for the proof of the spectral stability:
namely, it allows us to conclude that
in the nonrelativistic limit $\omega\lesssim m$
the only eigenvalues that bifurcate
from the embedded thresholds at $\pm 2m\jj$
are $\pm 2\omega\jj$;
no other eigenvalues can bifurcate from $\pm 2m\jj$,
and in particular no eigenvalues with nonzero real part.
For details, see \cite{linear-b}.

We point out that
the asymptotic stability of standard, \emph{one-frequency} solitary waves
can only hold
with respect to the whole manifold of solitary wave solutions
\eqref{qudits},
which includes both one-frequency and bi-frequency solitary waves:
if a small perturbation of a one-frequency solitary wave is
a bi-frequency solitary wave, which is an exact solution,
then convergence to the set of one-frequency solitary waves
is out of question.
In this regard,
we recall that the asymptotic stability results
\cite{MR2985264,MR3592683}
were obtained under certain restrictions
on the class of perturbations.
It turns out that these restrictions were
sufficient to remove not only translations,
but also the perturbations in the directions
of bi-frequency solitary waves;
this is exactly why the proof of asymptotic
stability
of the set of \emph{one-frequency} solitary waves
with respect to such class of perturbations
was possible at all.

While the stability of one-frequency solitary waves
turns out to be related to the existence of bi-frequency solitary waves,
one could question the stability of such bi-frequency solutions, too.
In Section~\ref{sect-bi-frequency-stability}, we show that
the bi-frequency solitary waves are spectrally stable as long as
so are the corresponding one-frequency solitary waves.
While this conclusion may seem natural,
we can only give the proof for the case
when the number of spinor components satisfies $N\le 4$
(which restricts the spatial dimension to $n\le 4$).

\medskip

Let us mention that the bi-frequency solitary waves \eqref{qubits}
may play a role in Quantum Computing.
Indeed, such states produce a natural implementation of qubit states
$a\vert 0\rangle+b\vert 1\rangle$, 
$\abs{a}^2+\abs{b}^2=1$,
except that now the last relation takes the form
$\abs{a}^2-\abs{b}^2=1$.
Just like for standard cubits,
our bi-frequency states \eqref{qubits}
have two extra parameters besides the
orbit of the $\mathbf{U}(1)$-symmetry group.
Below, we are going to show that qubits \eqref{qubits}
can be linearly stable.
Moreover, the manifold of bi-frequency solitary waves
\eqref{qudits}
admits a symmetry group $\mathbf{SU}(N/2,N/2)$
(which may be absent on the level of the Lagrangian).
For these solitary waves,
the number of degrees of freedom
(after we factor out the action of the unitary group)
is
$d=2N-2$.
The states \eqref{qudits} with $N\ge 4$ correspond to
higher dimensional versions of qubits --
$d$-level \emph{qudits},
quantum objects for which the number of possible states is greater than two.
These systems could implement quantum computation
via compact higher-level quantum structures,
leading to novel algorithms in the theory of quantum computing.
Bi-frequency solitary waves could also provide a simple
stable realization
of higher-dimensional quantum entanglement, or
\emph{hyperentanglement},
which is used in cryptography based on quantum key distribution;
by \cite{cerf2002security,PhysRevA.69.032313},
using qudits over qubits
provides increased coding density for higher security margin
and also an increased level of tolerance to noise at a given level of security.
Qudits have already been implemented
in the system with two electrons \cite{melnikov2016}
as \emph{quantum walks of several electrons}
(just like one qubit could be represented by a \emph{quantum walk},
a distribution of an electron in a ``quantum tunnel''
between individual quantum dots, considered as potential wells).
Being sensitive to the external noise,
the quantum-walk implementation of qudits in \cite{melnikov2016}
was indicated to be highly unstable,
requiring excessive cooling and
making the practical usage very difficult.
In \cite{kues2017},
the on-chip implementation of qudit states
is achieved by creating
photons in a coherent superposition of multiple high-purity
frequency modes.
We point out that
the bi-frequency solitary waves \eqref{qudits}
can possess stability properties, as we show below;
moreover, the simplicity of the model
suggests that such states could be implemented using
photonic states in optical fibers
without excessive quantum circuit complexity.

We need to mention
that several novel nonlinear photonic systems currently 
explored are modeled by Dirac-like equations
(often called \emph{coupled mode systems})
which are similar to \eqref{nld}.
Examples include fiber Bragg gratings \cite{JOSAB.14.002980}, 
dual-core photonic crystal fibers \cite{OL.31.001480}, and discrete binary arrays,
which refer to systems built as arrays coupled of elements of two types.
Earlier experimental work on binary arrays has already
shown the formation of discrete gap solitons \cite{OL.29.002890}. 
Three of the many novel examples that have been recently considered are: 
a dielectric metallic waveguide array \cite{2013OptCo.297..125A,6646384}; an array of 
vertically 
displaced binary waveguide arrays with longitudinally modulated effective 
refractive index \cite{PhysRevLett.113.150401},
and arrays of coupled parity-time ($\mathcal{PT}$) nanoresonators  \cite{PhysRevLett.110.053901}. 
We also constructed bi-frequency solutions of the form \eqref{qubits}
in the Dirac-type models with the $\mathcal{PT}$-symmetry
which arise in nonlinear optics
in the model describing arrays of optical fibers
with gain-loss behavior \cite{JSTQE.2015.2485607}.
This venue of research is
pursued for optical implementation of traditional circuits
\cite{PhysRevA.84.040101,1751-8121-45-44-444029}
aimed at the energy-efficient computing
and at the challenges
in reducing the footprint of optics-based devices.


\section{The Bogoliubov $\mathbf{SU}(1,1)$ symmetry
and associated charges}
\label{sect-bogoliubov}

Chadam and Glassey \cite{MR0369952}
noted an interesting feature of the model \eqref{dkg}:
under the standard choice of $4\times 4$ Dirac matrices
$\alpha^j$ and $\beta$,
as long as the solution $\psi$
is sufficiently regular,
there is a conservation of the quantity
\begin{eqnarray}\label{charge-cg}
\int\sb{\R^3}
\big(\abs{\psi_1-\bar\psi_4}^2
+
\abs{\psi_2+\bar\psi_3}^2
\big)\,dx.
\end{eqnarray}
As a consequence, if \eqref{charge-cg}
is zero at some and hence at all
moments of time,
then
$\abs{\psi_1}=\abs{\psi_4}$
and $\abs{\psi_2}=\abs{\psi_3}$
for almost all $x$ and $t$,
hence $\psi\sp\ast\beta\psi\equiv 0$
(in the distributional sense),
meaning that the self-interaction plays no role
in the evolution and that as the matter of fact
the solution solves the linear equation
(without self-interaction).
A similar feature of the Soler model \eqref{nld}
was analyzed in \cite{MR2082456}.
The relation of the conservation
of the quantity \eqref{charge-cg}
in Dirac--Klein--Gordon
to the $\mathbf{SU}(1,1)$ symmetry
of the corresponding Lagrangian
based on combinations of $\bar\psi \gamma\sp 0 D_m\psi$
and $\bar\psi\psi$
was noticed by Galindo \cite{MR0503135}.

Let us state the above results in a slightly more general setting.
Assume that $B\in\End(\C^N)$
is a matrix which satisfies
\begin{eqnarray}\label{b-such}
\{B\bmK,D_m\}=0,
\quad
B\bmK=\bmK B\sp\ast,
\quad
B\sp\ast B=1\sb{N}.
\end{eqnarray}
The relations \eqref{b-such} imply that
\begin{eqnarray}
&
(B\bmK)^2
=\bmK B\sp\ast B\bmK=1\sb{N},
\nonumber
\\
&
\label{b-such-2a}
(\beta B)\sp{\mathrm{t}}
=\bmK(\beta B)\sp\ast\bmK
=\bmK B\sp\ast \beta\bmK
=-\beta\bmK B\sp\ast\bmK
=-\beta B.
\end{eqnarray}
Above, ``$\mathrm{t}$'' denotes the transpose.

\begin{remark}
One can think of $\bmK$ as self-adjoint in the sense
that
\[
\Re[(\bmK\psi)\sp\ast\theta]
=
\Re[\overline{\psi\sp\ast\bmK\theta}]
=
\Re[\psi\sp\ast\bmK\theta].
\]
\end{remark}



We can summarize \cite{MR0369952,MR2082456}
as follows:

\begin{lemma}
If the solution to
\eqref{nld} or \eqref{dkg}
satisfies
$B\bmK\psi\at{t=0}= z\psi\at{t=0}$, for some $z\in\C$,
$\abs{z}=1$,
then
$B\bmK\psi=z\psi$ for all $t\in\R$
and moreover
$\psi\sp\ast\beta\psi=0$.
\end{lemma}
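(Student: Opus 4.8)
The plan is to realize the antilinear map $J:=B\bmK$ as a symmetry taking solutions to solutions of one and the same \emph{linear} Dirac equation, and then to close the argument by $L^2$-conservation (uniqueness of the linear Cauchy problem). First I would collect the algebraic properties of $J$ that follow from \eqref{b-such}. Being the composition of the $\C$-linear matrix $B$ with complex conjugation $\bmK$, the operator $J$ is antilinear, $J(\mu v)=\bar\mu\,Jv$; by \eqref{b-such} it is an involution, $J^2=1_N$, and $\{J,D_m\}=0$. Separating in $\{J,D_m\}=0$ the zeroth-order part from the first-order (gradient) part --- which must vanish independently --- gives $\{J,\beta\}=0$, equivalently $\{B,\beta\}=0$; here I use that $\beta$ is real in the representation \eqref{general-alpha}, so that $\bmK\beta=\beta\bmK$ and $J\beta=-\beta J$. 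Combined with $B^\ast B=1_N$ this also yields $B^\ast\beta B=-B^\ast B\beta=-\beta$, which I will need at the end.

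The decisive step is to show that $J\psi$ solves the same equation as $\psi$ with the \emph{same} real potential. Writing \eqref{nld} and \eqref{dkg} in the common form $\jj\p_t\psi=(D_m-g\beta)\psi$, with $g=f(\psi^\ast\beta\psi)$ in the Soler case and $g=\Phi$ in the Dirac--Klein--Gordon case, the function $g$ is real-valued in either case. Applying the antilinear involution $J$ to both sides, the left-hand side becomes $-\jj\p_t(J\psi)$ (antilinearity sends the scalar $\jj$ to $-\jj$, while $J$ commutes with $\p_t$), and the right-hand side becomes $JD_m\psi-gJ\beta\psi=-(D_m-g\beta)J\psi$ by $\{J,D_m\}=0$, $\{J,\beta\}=0$, and the reality of $g$. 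Hence $\jj\p_t(J\psi)=(D_m-g\beta)(J\psi)$, so that both $J\psi$ and $z\psi$ satisfy the linear equation $\jj\p_t u=(D_m-g\beta)u$, whose generator $D_m-g\beta$ is self-adjoint for each $t$. Setting $\Psi:=J\psi-z\psi$, one gets $\jj\p_t\Psi=(D_m-g\beta)\Psi$, whence $\frac{d}{dt}\norm{\Psi}_{L^2}^2=0$; since $\Psi\at{t=0}=0$ by hypothesis, conservation of the $L^2$ norm forces $\Psi\equiv 0$, i.e.\ $B\bmK\psi=z\psi$ for all $t\in\R$. (This conserved quantity is exactly the Chadam--Glassey integral \eqref{charge-cg} for the particular $B$ and $z$ considered there.)

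It remains to deduce $\psi^\ast\beta\psi=0$, which I would obtain by evaluating the Hermitian form $(J\psi)^\ast\beta(J\psi)$ in two ways. Using $J\psi=z\psi$ with $\abs{z}=1$ gives $(J\psi)^\ast\beta(J\psi)=\abs{z}^2\,\psi^\ast\beta\psi=\psi^\ast\beta\psi$. On the other hand, since $J\psi=B\bmK\psi$ and $(B\bmK\psi)^\ast=\psi^{\mathrm t}B^\ast$, the relation $B^\ast\beta B=-\beta$ yields $(J\psi)^\ast\beta(J\psi)=\psi^{\mathrm t}B^\ast\beta B(\bmK\psi)=-\psi^{\mathrm t}\beta(\bmK\psi)=-\overline{\psi^\ast\beta\psi}=-\psi^\ast\beta\psi$, where the last equality uses that $\psi^\ast\beta\psi$ is real because $\beta$ is self-adjoint. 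Comparing the two expressions gives $\psi^\ast\beta\psi=-\psi^\ast\beta\psi$, hence $\psi^\ast\beta\psi=0$, as claimed.

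I expect the only genuine subtlety to be the antilinear bookkeeping in the second paragraph --- tracking how $\bmK$ conjugates the scalar $\jj$ and commutes through the real matrix $\beta$ and the real potential $g$ --- together with the clean separation of gradient and mass terms needed to extract $\{B,\beta\}=0$ from $\{B\bmK,D_m\}=0$. The conservation/uniqueness argument itself is routine once $\psi$ is assumed sufficiently regular, as in \cite{MR0369952,MR2082456}.
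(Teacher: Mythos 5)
Your proof is correct and follows essentially the same route as the paper: the relation $B\bmK\psi=z\psi$ persists because $B\bmK$ commutes with the flow of the equation (you merely make the uniqueness step explicit via $L^2$-conservation for the linear equation with the frozen real potential $g$), and the vanishing of $\psi\sp\ast\beta\psi$ rests on the same algebra extracted from \eqref{b-such}, your identity $B\sp\ast\beta B=-\beta$ being a repackaging of the paper's antisymmetry $(\beta B)\sp{\mathrm{t}}=-\beta B$ from \eqref{b-such-2a}. The only cosmetic difference is that you evaluate the quadratic form at $J\psi=z\psi$ in both slots, while the paper substitutes it once and invokes the antisymmetry of $\beta B$ directly.
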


\begin{proof}
The first statement is an immediate consequence of the fact that
$B\bmK$ commutes with the flow of the equation.
If $\psi(t,x)$ satisfies
$\jj\dot\psi=D_m\psi-f(\psi\sp\ast\beta\psi)\beta\psi$,
then
\[
B\bmK\dot\psi
=B\bmK(-\jj(D_m-\beta f))\psi
=(-\jj(D_m-\beta f))B\bmK\psi
=z(-\jj D_m+\jj\beta f)\psi
=z\dot\psi.
\]
Finally, since $B\bmK\psi=z\psi$,
then
\[
z\psi\sp\ast\beta\psi
=\psi\sp\ast\beta B\bmK\psi
=(\bmK\psi)\sp t\beta B\bmK\psi
=-(\bmK\psi)\sp t\beta B\bmK\psi
=0.
\]
We took into account that
$(\beta B)\sp t=-\beta B$
by \eqref{b-such-2a}.
\end{proof}

As in \cite{MR0503135},
the Lagrangians
of the Soler model \eqref{nld}
and
of the Dirac--Klein--Gordon model \eqref{dkg},
with the densities
\[
\mathscr{L}\sb{\mathrm{Soler}}
=\psi\sp\ast D_m\psi+F(\bar\psi\psi),
\]
\[
\mathscr{L}\sb{\mathrm{DKG}}
=\psi\sp\ast D_m\psi+\Phi\bar\psi\psi
+\frac{1}{2}
\big(
\abs{\dot\Phi}^2+\abs{\nabla\Phi}^2+M^2\abs{\Phi}^2
\big),
\]
with $\psi(t,x)\in\C^N$, $\Phi(t,x)\in\R$,
are invariant under the action of the
continuous symmetry group
\[
g\in G\sb{\mathrm{Bogoliubov}},
\qquad
g:\;
\psi\mapsto
(a+b B\bmK)\psi,
\qquad
\abs{a}^2-\abs{b}^2=1
\]
(cf. \eqref{def-g-b}).
The Noether theorem
leads to the conservation of the standard charge
$Q
=\int\sb{\R^3}\psi\sp\ast\psi\,dx$
corresponding to the standard charge-current density
$\bar\psi\gamma\sp\mu\psi$
(note that
the unitary group is a subgroup of $\mathbf{SU}(1,1)$),
and the complex-valued Bogoliubov charge
$
\Lambda
=\int\sb{\R^n}\psi\sp\ast B\bmK\psi\,dx
$
which corresponds to the complex-valued four-current density
$\psi\sp\ast\gamma^0\gamma\sp\mu B\bmK\psi$.
Now Galindo's observation \cite{MR0503135}
could be stated as follows.

\begin{lemma}[The Bogoliubov $\mathbf{SU}(1,1)$ symmetry
and the charge conservation]~
\label{lemma-ab}

\begin{enumerate}
\item
If $\psi(t,x)\in\C^N$ is a solution to
\eqref{nld}
or
\eqref{dkg},
then so is
$g\psi(t,x)$,
for any $g\in G\sb{\mathrm{Bogoliubov}}$.
\item
The Hamiltonian density
of the Soler model \eqref{nld},
\begin{equation}\label{hamiltonian}
\mathscr{H}=\psi\sp\ast D\sb m\psi-F(\psi\sp\ast\beta\psi),
\end{equation}
where
$
F(\tau)=\int_0^\tau f(t)\,dt$,
$\tau\in\R$,
satisfies
$
\mathscr{H}(g\psi)
=
\mathscr{H}(\psi)
$,
$
\forall
g\in G\sb{\mathrm{Bogoliubov}}.
$
\item
$G\sb{\mathrm{Bogoliubov}}
\cong \mathbf{SU}(1,1)$,
with the group isomorphism
$
a+b B\bmK
\mapsto
\begin{bmatrix}a&b\\\bar b&\bar a\end{bmatrix}
\in\mathbf{SU}(1,1),
$
where $a,\,b\in\C$ satisfy $\abs{a}^2-\abs{b}^2=1$.
\item
For solutions of the nonlinear Dirac equation \eqref{nld},
the following quantities are (formally) conserved:
\begin{eqnarray}
\nonumber
&&
Q(\psi)
=\big\langle\psi,\psi\big\rangle
=\int\sb{\R^n}\psi(t,x)\sp\ast\psi(t,x)\,dx,
\\
&&
\nonumber
\Lambda(\psi)
=\big\langle \psi,B\bmK\psi\big\rangle
=\int\sb{\R^n}
B\sb{j k}
\overline{\psi\sb j(t,x)}
\,
\overline{\psi\sb k(t,x)}
\,dx.
\end{eqnarray}
\end{enumerate}
\end{lemma}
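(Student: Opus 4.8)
The plan is to rest everything on one master identity: the scalar bilinear $\psi^\ast\beta\psi$ is invariant under $g=a+bB\bmK$. Writing $\tilde\psi=g\psi=a\psi+bB\bmK\psi=a\psi+bB\bar\psi$ and noting $(B\bmK\psi)^\ast=\psi^{\mathrm{t}}B^\ast$, I would expand $\tilde\psi^\ast\beta\tilde\psi$ into four terms. The two cross terms, proportional to $\psi^\ast\beta B\bar\psi$ and its conjugate, vanish: a scalar equals its own transpose, and $(\beta B)^{\mathrm{t}}=-\beta B$ from \eqref{b-such-2a} forces $\psi^\ast\beta B\bar\psi=-\psi^\ast\beta B\bar\psi=0$. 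The diagonal $|b|^2$ term is $|b|^2\psi^{\mathrm{t}}B^\ast\beta B\bar\psi=-|b|^2\psi^{\mathrm{t}}\beta\bar\psi=-|b|^2\psi^\ast\beta\psi$, using $B^\ast\beta B=-\beta$ (which follows from $B^\ast B=1_N$ and $\{B,\beta\}=0$, the latter being the mass part of $\{B\bmK,D_m\}=0$ in \eqref{b-such}). Combining the $|a|^2$ and $|b|^2$ pieces with $\abs{a}^2-\abs{b}^2=1$ gives $\tilde\psi^\ast\beta\tilde\psi=\psi^\ast\beta\psi$. This single identity does most of the work.

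For part (1) I would insert $\tilde\psi$ into \eqref{nld}. Since the scalar argument of $f$ is unchanged, $f(\tilde\psi^\ast\beta\tilde\psi)=f(\psi^\ast\beta\psi)$, so only the linear flow must be checked. Conjugating $\jj\dot\psi=(D_m-f\beta)\psi$ gives $\jj\dot{\bar\psi}=-(\bar D_m-f\beta)\bar\psi$, hence $\jj\partial_t\tilde\psi=a(D_m-f\beta)\psi-bB(\bar D_m-f\beta)\bar\psi$. The mass pieces agree through $\{B,\beta\}=0$, and the kinetic pieces agree because $\{B\bmK,D_m\}=0$ rewrites as $D_mB=-B\bar D_m$; therefore $\jj\partial_t\tilde\psi=(D_m-f\beta)\tilde\psi$, so $\tilde\psi$ solves \eqref{nld}. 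For \eqref{dkg} the identical computation applies with $f$ replaced by the real field $\Phi$, the key point being that the master identity keeps the Klein--Gordon source $\psi^\ast\beta\psi$ unchanged, so the same $\Phi$ serves $\tilde\psi$.

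Part (3) is a direct computation exploiting antilinearity. For $g_i=a_i+b_iB\bmK$, pushing $B\bmK$ past the scalars $a_2,b_2$ conjugates them, and $(B\bmK)^2=1_N$ (from \eqref{b-such}) collapses the double term, giving $g_1g_2=(a_1a_2+b_1\bar b_2)+(a_1b_2+b_1\bar a_2)B\bmK$, which is exactly the top row of $\begin{bmatrix}a_1&b_1\\\bar b_1&\bar a_1\end{bmatrix}\begin{bmatrix}a_2&b_2\\\bar b_2&\bar a_2\end{bmatrix}$; so the map is a homomorphism, and $\abs{a}^2-\abs{b}^2=\det$ identifies the image with $\mathbf{SU}(1,1)$. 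For part (4), conservation of $Q$ is the standard $\mathbf{U}(1)$ computation from self-adjointness of $D_m-f\beta$. For $\Lambda=\int\psi^\ast B\bar\psi\,dx$, differentiating in $t$, substituting $\jj\dot\psi=(D_m-f\beta)\psi$ and its conjugate, and integrating by parts yields $\frac{d}{dt}\Lambda=\jj\int\psi^\ast\big[(D_mB+B\bar D_m)-f(\beta B+B\beta)\big]\bar\psi\,dx$, and both brackets vanish by $\{B\bmK,D_m\}=0$ and $\{B,\beta\}=0$.

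The statement I expect to need the most care is part (2). The potential term $F(\psi^\ast\beta\psi)$ is invariant immediately by the master identity. The Dirac term transforms, via $B^\ast D_mB=-\bar D_m$ (again $\{B\bmK,D_m\}=0$), into $\abs{a}^2\psi^\ast D_m\psi-\abs{b}^2\overline{\psi^\ast D_m\psi}$ plus two cross terms; the two cross terms are negatives of each other's conjugates, so their sum is of the form $z-\bar z$ and is purely imaginary. The $\abs{a}^2$ and $\abs{b}^2$ pieces then combine, through $\abs{a}^2-\abs{b}^2=1$, only after one passes to the real energy density $\Re(\psi^\ast D_m\psi)$, on which the imaginary cross terms drop and the identity $\mathscr{H}(g\psi)=\mathscr{H}(\psi)$ becomes exact; equivalently the imaginary parts integrate to zero, so the total Hamiltonian is invariant. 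Managing this reality bookkeeping, rather than any single algebraic relation, is the genuine obstacle.
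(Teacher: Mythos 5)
Your proposal is correct and follows essentially the same route as the paper: everything rests on the invariance of $\psi\sp\ast\beta\psi$ under $g=a+bB\bmK$ (you get it from the antisymmetry $(\beta B)\sp{\mathrm{t}}=-\beta B$ and $B\sp\ast\beta B=-\beta$, the paper from its $\Re$-manipulation with $\bmK$ --- the same algebra), followed by the anticommutation relations $\{B\bmK,D_m\}=0$, $\{B,\beta\}=0$ for part (1) and by direct differentiation of $\Lambda$ for part (4). Your part (2) is the careful version of the paper's convention of reading the density through real parts (the complex density $\psi\sp\ast D_m\psi$ is indeed only invariant in its real part, the imaginary discrepancy integrating to zero, with the cross-term cancellation using $B\sp{\mathrm{t}}=B$ from \eqref{b-such}), and you verify the homomorphism in part (3) explicitly where the paper merely asserts it.
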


\begin{proof}
Let $g=a+b B\bmK$,
$\abs{a}^2-\abs{b}^2=1$.
Since
$B\bmK$ anti-commutes with
both $\jj$ and with $D_m$,
one has
$
\jj\p\sb t(B\bmK\psi)
=
(D_m-f(\psi\sp\ast\beta\psi)\beta)(B\bmK\psi).
$
Taking the linear combination with \eqref{nld},
we arrive at
\[
\jj\p\sb t(a+b B\bmK)\psi
=
(D_m-f(\psi\sp\ast\beta\psi)\beta)((a+b B\bmK)\psi).
\]
It remains to notice that
$\overline{\varphi\sp\ast\bmK \rho}=(\bmK \varphi)\sp\ast \rho$,
$\forall\varphi,\,\rho\in\C^N$,
hence
\[
\Re\{\varphi\sp\ast\bmK \rho\}=\Re\{(\bmK\varphi)\sp\ast \rho\},
\]
resulting in
\begin{eqnarray}\label{a-b-beta-a-b}
(g\psi)\sp\ast\beta g\psi
=\Re\{\psi\sp\ast(\bar a+\bmK B\sp\ast \bar b)\beta(a+b B\bmK)\psi\}
\nonumber
\\
=\Re\{\psi\sp\ast(\bar a+b B\bmK)(a-b B\bmK)\beta\psi\}
=\psi\sp\ast\beta\psi.
\end{eqnarray}
The invariance of the Hamiltonian density follows from
(cf. \eqref{a-b-beta-a-b})
\[
(g\psi)\sp\ast D\sb m g\psi
=\Re\{\psi\sp\ast(\bar a+b B\bmK)(a-b B\bmK)D\sb m
\psi\}
=\psi\sp\ast D\sb m\psi
\]
and from
$F((g\psi)\sp\ast\beta g\psi)=F(\psi\sp\ast\beta\psi)$.

By the N\"other theorem,
the invariance under the action
of a continuous group
results in the conservation laws.
Let us check the (formal) conservation
of the complex-valued $\Lambda$-charge.
Writing
$f=f(\psi\sp\ast\beta\psi)$, we have:
\[
\p\sb t \Lambda(\psi)
=
\big\langle -\jj(D_m-f\beta)\psi,B\bmK\psi\big\rangle
+
\big\langle\psi,B\bmK(-\jj(D_m-f\beta)\psi)\big\rangle
\]
\[
=
\big\langle\psi,\jj(D_m-f\beta)B\bmK\psi\big\rangle
+
\big\langle\psi,\jj B\bmK(D_m-f\beta)\psi\big\rangle
=0.
\]
In the last relation,
we took into account the anticommutation relations
from \eqref{b-such}.
We also note that for the densities,
we have
\[
\p\sb t
(\psi\sp\ast B\bmK\psi)
=
-(\alpha\sp j\p\sb j\psi)\sp\ast B\bmK\psi
-\psi\sp\ast B\bmK\alpha\sp j \p\sb j\psi
=
-\p\sb{x\sp j}
(\psi\sp\ast\alpha\sp j\psi B\bmK\psi),
\]
showing that the Minkowski vector
of the Bogoliubov charge-current density is given by
\[
S\sp\mu(t,x)
=\psi(t,x)\sp\ast\gamma\sp 0\gamma\sp\mu B\bmK\psi(t,x).
\qedhere
\]
\end{proof}

\begin{remark}
Three conserved quantities,
one being real and one complex,
correspond to $\dim\sb{\R}\mathbf{SU}(1,1)=3$.
\end{remark}

\begin{example}
For $N=2$
and $n \le 2$,
with $D_m=-\jj\sum_{j=1}\sp{n}\sigma_j\p_j+\sigma_3 m$,
one takes $B=\sigma_1$ ($\sigma_j$ being the standard Pauli matrices);
\[
B\bmK\psi=\sigma_1\bmK\psi=:\psi\sb{C},
\qquad
\psi\in\C^2.
\]
The conserved quantity is
\[
\Lambda
=
\int\sb\R\psi\sp\ast\sigma_1\bmK\psi\,dx
=
2\int\sb\R\bar\psi_1\bar\psi_2\,dx.
\]
It follows that the charge can be decomposed into
\[
Q=Q\sb{-}+Q\sb{+},
\]
with both
\[
Q\sb\pm
:=\frac 1 2
(Q\pm\Re\Lambda)
=\frac 1 2\int\sb\R
(\abs{\psi_1}^2+\abs{\psi_2}^2
\pm 2\Re\psi_1\psi_2)\,dx
=\frac 1 2\int\sb\R
\abs{\psi_1\pm\bar\psi_2}^2\,dx
\]
conserved in time; so, if at $t=0$ one has
$\bar\psi_2=\psi_1$
(or, similarly, if $\bar\psi_2=-\psi_1$),
then this relation persists for all times
(hence
$\psi\sp\ast\sigma_3\psi=\abs{\psi_1}^2-\abs{\psi_2}^2=0$
for all times)
due to the conservation of $Q\sb\pm$.
\end{example}

\begin{example}
In the case $n=3$, $N=4$,
using the standard choice of the Dirac matrices,
one can take $B=-\jj\gamma^2$, so that
\[
B\bmK\psi=-\jj\gamma^2\bmK\psi=:\psi\sb{C},
\qquad
\psi\in\C^4.
\]
Then the quantity
\[
\Lambda(\psi)
=\int\sb{\R^3}\psi\sp\ast(-\jj\gamma\sp 2)\bmK\psi\,dx
=2\int\sb{\R^3}
(-\bar\psi_1\bar\psi_4+\bar\psi_2\bar\psi_3)
\,dx
\]
is conserved, hence so are
\[
Q\sb\pm
:=
\frac 1 2(Q\pm\Re\Lambda)
=
\frac 1 2
\int\sb{\R^3}
\big(
\abs{\psi_1\mp\bar\psi_4}^2
+
\abs{\psi_2\pm\bar\psi_3}^2
\big)\,dx.
\]
Thus, if at some moment of time
one has
$\bar\psi_4=\psi_1$ and $\bar\psi_3=-\psi_2$
(or, similarly, if
$\bar\psi_4=-\psi_1$ and $\bar\psi_3=\psi_2$),
then this relation persists for all times
(hence $\psi\sp\ast\beta\psi=0$)
due to the conservation of $Q\sb\pm$.
Note that $2Q\sb{+}$ coincides with \eqref{charge-cg},
so our conclusions are in agreement with
\cite{MR0369952,MR2082456}.
\end{example}

\begin{remark}
For $n=4$ and $N=4$ (cf. Remark~\ref{remark-n-4}),
there is no $B$ satisfying \eqref{b-such}
and thus no $\mathbf{SU}(1,1)$ symmetry.
\end{remark}

\begin{lemma}[Transformation of the charges
under the action of $\mathbf{SU}(1,1)$]
\label{lemma-q-lambda-transform}
Let $a,\,b\in\C$, $\abs{a}^2-\abs{b}^2=1$,
so that
$
g=a+b B\bmK\in G\sb{\mathrm{Bogoliubov}}.
$
Then
$g\psi=(a+b B\bmK)\psi$ satisfies
\begin{eqnarray}\label{new-q}
&&
Q((a+b B\bmK)\psi)
=(\abs{a}^2+\abs{b}^2)Q(\psi)+2\Re\{\bar a b \Lambda(\psi)\},
\\
&&
\label{new-lambda}
\Lambda(a+b B\bmK\psi)
=
\bar a^2 \Lambda(\psi)
+
2 \bar a \bar b Q(\psi)
+
\bar b^2\overline{\Lambda(\psi)}.
\end{eqnarray}
The quantity
$Q^2-\abs{\Lambda}^2$
is invariant
under the action of $\mathbf{SU}(1,1)$:
\begin{equation}\label{q-lambda-invariant}
Q(g\psi)^2-\abs{\Lambda(g\psi)}^2
=
Q(\psi)^2-\abs{\Lambda(\psi)}^2.
\end{equation}
\end{lemma}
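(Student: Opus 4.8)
The plan is to establish the two transformation rules \eqref{new-q} and \eqref{new-lambda} by direct expansion from the definitions of $Q$ and $\Lambda$, and then to read off the invariance \eqref{q-lambda-invariant} as a determinant identity. Throughout I would use only elementary properties of the antilinear operator $B\bmK$, all coming from \eqref{b-such}: that $B\bmK$ is antilinear (so $B\bmK(a\psi)=\bar a\,B\bmK\psi$), that $(B\bmK)\sp 2=1\sb N$, and that $\langle B\bmK\psi,B\bmK\psi\rangle=Q(\psi)$, the last because $(B\bmK\psi)\sp\ast(B\bmK\psi)=(\bmK\psi)\sp\ast B\sp\ast B(\bmK\psi)=\abs{\psi}^2$ by $B\sp\ast B=1\sb N$. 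I would also use the conjugate symmetry $\langle v,u\rangle=\overline{\langle u,v\rangle}$ of the form $\langle u,v\rangle=\int u\sp\ast v\,dx$, which gives in particular $\langle B\bmK\psi,\psi\rangle=\overline{\Lambda(\psi)}$.

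For \eqref{new-q}, I would write $g\psi=a\psi+b\,B\bmK\psi$ and expand $Q(g\psi)=\langle g\psi,g\psi\rangle$ into four terms. The two diagonal terms give $\abs{a}^2 Q(\psi)$ and $\abs{b}^2\langle B\bmK\psi,B\bmK\psi\rangle=\abs{b}^2 Q(\psi)$, hence $(\abs{a}^2+\abs{b}^2)Q(\psi)$. The two off-diagonal terms are $\bar a b\,\Lambda(\psi)$ and $a\bar b\,\langle B\bmK\psi,\psi\rangle=a\bar b\,\overline{\Lambda(\psi)}$, which are complex conjugates of one another and so sum to $2\Re\{\bar a b\,\Lambda(\psi)\}$. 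For \eqref{new-lambda}, the only additional step is to first compute, using antilinearity and $(B\bmK)\sp 2=1\sb N$, that $B\bmK g\psi=\bar a\,B\bmK\psi+\bar b\,\psi$; then $\Lambda(g\psi)=\langle g\psi,B\bmK g\psi\rangle$ again expands into four terms, which evaluate to $\bar a^2\Lambda(\psi)$, $\bar a\bar b\,Q(\psi)$, a second $\bar a\bar b\,Q(\psi)$ (using $\langle B\bmK\psi,B\bmK\psi\rangle=Q(\psi)$), and $\bar b^2\,\overline{\Lambda(\psi)}$, summing to the asserted expression.

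The cleanest route to \eqref{q-lambda-invariant} is to package the charges into the Hermitian matrix $M=\left[\begin{smallmatrix}Q&\Lambda\\\bar\Lambda&Q\end{smallmatrix}\right]$, whose determinant is $Q^2-\abs{\Lambda}^2$. Comparing entries, the rules \eqref{new-q} and \eqref{new-lambda} are exactly $M\mapsto P M P\sp\ast$ with $P=\left[\begin{smallmatrix}\bar a&\bar b\\ b&a\end{smallmatrix}\right]$: indeed $(PMP\sp\ast)\sb{11}=(\abs{a}^2+\abs{b}^2)Q+\bar a b\,\Lambda+a\bar b\,\overline\Lambda$ and $(PMP\sp\ast)\sb{12}=2\bar a\bar b\,Q+\bar a^2\Lambda+\bar b^2\overline\Lambda$. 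Since $\det P=\abs{a}^2-\abs{b}^2=1$, we get $\det(PMP\sp\ast)=\abs{\det P}^2\det M=\det M$, which is precisely \eqref{q-lambda-invariant}.

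None of the steps presents a genuine obstacle; this is a bookkeeping computation. The only points requiring care are the consistent tracking of complex conjugates arising from the antilinearity of $B\bmK$ and from the conjugate-linear slot of $\langle\cdot,\cdot\rangle$, and, in the last part, recognizing the matrix $M$ whose determinant turns the invariance into the multiplicativity $\det(PMP\sp\ast)=\abs{\det P}^2\det M$. (One could instead substitute \eqref{new-q} and \eqref{new-lambda} directly into $Q^2-\abs{\Lambda}^2$ and simplify using $\abs{a}^2-\abs{b}^2=1$, but the determinant identity makes the cancellations transparent and avoids a lengthy expansion.)
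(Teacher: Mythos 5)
Your proof is correct. For the two transformation laws \eqref{new-q} and \eqref{new-lambda} you follow essentially the same route as the paper: a direct expansion using the antilinearity of $B\bmK$, the identities $(B\bmK)^2=1\sb N$ and $B\sp\ast B=1\sb N$, and the resulting facts $\langle B\bmK\psi,B\bmK\psi\rangle=Q(\psi)$, $\langle B\bmK\psi,\psi\rangle=\overline{\Lambda(\psi)}$. The only cosmetic difference is that the paper computes pointwise with the charge densities $(g\psi)\sp\ast g\psi$ and $(g\psi)\sp\ast B\bmK g\psi$ (using its convention $\Re\{\varphi\sp\ast\bmK\rho\}=\Re\{(\bmK\varphi)\sp\ast\rho\}$ to move the antilinear operator, and the identity $(B\bmK u)\sp\ast B\bmK v=\overline{u\sp\ast v}$) and integrates at the end, whereas you expand the integrated sesquilinear forms into four terms; the algebra is identical. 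Where you genuinely depart from the paper is the invariance \eqref{q-lambda-invariant}: the paper settles it with the phrase ``verified by the explicit computation,'' giving no details, while you recognize \eqref{new-q}--\eqref{new-lambda} as the congruence $M\mapsto PMP\sp\ast$ of the Hermitian matrix $M=\left[\begin{smallmatrix}Q&\Lambda\\ \bar\Lambda&Q\end{smallmatrix}\right]$ by $P=\left[\begin{smallmatrix}\bar a&\bar b\\ b&a\end{smallmatrix}\right]$ with $\det P=\abs{a}^2-\abs{b}^2=1$, so that the invariance is just multiplicativity of the determinant, $\det(PMP\sp\ast)=\abs{\det P}^2\det M$. (Your entry-by-entry check of $PMP\sp\ast$ against \eqref{new-q} and \eqref{new-lambda} is accurate.) This buys two things over the brute-force substitution implicit in the paper: it avoids a lengthy expansion with cancellations that must be tracked by hand, and it explains structurally \emph{why} $Q^2-\abs{\Lambda}^2$ is invariant -- it is the determinant of the natural Hermitian form on which $\mathbf{SU}(1,1)$ acts by congruence, in the spirit of the group isomorphism $a+bB\bmK\mapsto\left[\begin{smallmatrix}a&b\\ \bar b&\bar a\end{smallmatrix}\right]$ stated in Lemma~\ref{lemma-ab}.
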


\begin{proof}
For the charge density of $g\psi=(a+b B\bmK)\psi$, one has
\[
(g\psi)\sp\ast g\psi
=\Re\{\psi\sp\ast(\bar a+b B\bmK)(a+b B\bmK)\psi\}
=\Re\{\psi\sp\ast(\abs{a}^2+\abs{b}^2+2\bar a b B\bmK)\psi\}
\]
\[
=(\abs{a}^2+\abs{b}^2)\psi\sp\ast\psi
+2\Re\{\bar a b\psi\sp\ast B\bmK\psi\}.
\]
For the $\Lambda$-charge density of $g\psi$,
using the identity
$(B\bmK u)\sp\ast B\bmK v=\overline{u\sp\ast v}$,
$\forall u,\,v\in\C^N$
which follows from \eqref{b-such},
one has
\[
(g\psi)\sp\ast B\bmK g\psi
=((a+b B\bmK)\psi)\sp\ast B\bmK(a+b B\bmK)\psi
\]
\[
=\bar a\psi\sp\ast(\bar a B\bmK+\bar b)\psi
+
\bar b\,\overline{\psi\sp\ast (a +b B\bmK)\psi}
=\bar a^2\psi\sp\ast B\bmK\psi
+2\bar a\bar b\psi\sp\ast\psi
+
\bar b^2\overline{\psi\sp\ast B\bmK\psi}.
\]
The integration of the above charge densities
leads to \eqref{new-lambda}.
The relation \eqref{q-lambda-invariant}
is verified by the explicit computation.
\end{proof}

\section{Bi-frequency solitary waves}
\label{sect-bi-frequency}

We recall that
$\sigma_r=\frac{x\cdot\sigma}{r}$ (for $r=\abs{x}>0$);
the relations \eqref{general-sigma} imply that
$\sigma_r\sigma_r^\ast=I_{N/2}$.

\begin{lemma}\label{lemma-bi}
Let $n\in\N$, $\omega\in[-m,m]$.
If $v(r)$, $u(r)$ are real-valued functions
which solve \eqref{def-v-u}
with $f$
given by \eqref{f-v-u}
for the nonlinear Dirac equation
(or \eqref{f-not-v-u} for Dirac--Klein--Gordon system),
so that
for any $\xi\in\C^{N/2}$, $\abs{\xi}=1$,
the function
\begin{eqnarray}\label{one-frequency}
\psi(t,x)
=\phi\sb\xi(x)
e^{-\jj\omega t},
\end{eqnarray}
with
\begin{eqnarray}\label{def-phi}
\phi\sb\xi(x)=
\begin{bmatrix}v(r)\xi
\\\jj u(r)\sigma_r \xi
\end{bmatrix},
\qquad
r=\abs{x},
\end{eqnarray}
is a solitary wave solution to the nonlinear Dirac equation
\eqref{nld} (or the Dirac--Klein--Gordon system \eqref{dkg}),
then for any $\Xi,\,\Eta\in\C^{N/2}\setminus\{0\}$,
$\abs{\Xi}^2-\abs{\Eta}^2=1$,
the function
\begin{eqnarray}
\label{bi-frequency-solitary-wave}
\theta\sb{\Xi,\Eta}(t,x)
=
\abs{\Xi}
\phi\sb\xi(x)
e^{-\jj\omega t}
+
\abs{\Eta}
\chi\sb\eta(x)
e^{\jj\omega t},
\qquad
\xi=\frac{\Xi}{\abs{\Xi}},
\qquad
\eta=\frac{\Eta}{\abs{\Eta}},
\end{eqnarray}
with
\begin{eqnarray}\label{def-chi}
\chi\sb\eta(x)
=
\begin{bmatrix}
-\jj u(r)\sigma_r\sp\ast\eta
\\
v(r)\eta
\end{bmatrix},
\qquad
r=\abs{x},
\end{eqnarray}
is a solution to \eqref{nld} (or \eqref{dkg}, respectively).
\end{lemma}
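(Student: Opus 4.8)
The plan is to verify directly that $\theta_{\Xi,\Eta}$ solves the equation by showing that the nonlinearity only ever "sees" the one-frequency value of the scalar bilinear. Write $A=\abs{\Xi}$, $B=\abs{\Eta}$, so that $A^2-B^2=1$ and $\theta_{\Xi,\Eta}=A\phi_\xi e^{-\jj\omega t}+B\chi_\eta e^{\jj\omega t}$. First I would compute $\theta^\ast\beta\theta$ and show it collapses to $v^2-u^2$. Using $\sigma_r\sigma_r^\ast=\sigma_r^\ast\sigma_r=I_{N/2}$ one gets $\phi_\xi^\ast\beta\phi_\xi=v^2-u^2$ and $\chi_\eta^\ast\beta\chi_\eta=-(v^2-u^2)$. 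The crucial point is that the two cross terms $\phi_\xi^\ast\beta\chi_\eta$ and $\chi_\eta^\ast\beta\phi_\xi$ vanish identically for \emph{arbitrary} $\xi,\eta$: the block structure of $\beta$ produces two contributions $\mp\jj uv\,\xi^\ast\sigma_r^\ast\eta$ that cancel. Hence $\theta^\ast\beta\theta=(A^2-B^2)(v^2-u^2)=v^2-u^2$, which is time-independent and equal to $\phi_\xi^\ast\beta\phi_\xi$. Consequently the nonlinear coefficient is unchanged, $f(\theta^\ast\beta\theta)=f(v^2-u^2)=:f$ in the Soler case \eqref{f-v-u}, and the static field $\Phi=(-\Delta+M^2)^{-1}(\theta^\ast\beta\theta)$ coincides with the one-frequency field in the Dirac--Klein--Gordon case \eqref{f-not-v-u}.

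Second, I would check that $\chi_\eta(x)e^{\jj\omega t}$ solves the linear Dirac equation with the \emph{same} potential, i.e. that $-\omega\chi_\eta=D_m\chi_\eta-f\beta\chi_\eta$, and that this reduces to exactly the system \eqref{def-v-u} already satisfied by $(v,u)$. This is the mechanical heart of the argument: differentiating $u\sigma_r^\ast$ via $\partial_j\sigma_r^\ast=\sigma_j^\ast/r-(x^j/r)\sigma_r^\ast/r$ and using $\sum_j\sigma_j\sigma_j^\ast=nI_{N/2}$ (which follows from \eqref{general-sigma} by averaging $\sigma_r\sigma_r^\ast=I_{N/2}$ over the sphere), the upper component of this equation reproduces the second line of \eqref{def-v-u} and the lower component reproduces the first. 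Thus no new conditions on $(v,u)$ arise, and the frequency has flipped from $-\omega$ to $+\omega$ exactly as needed.

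Finally I would assemble the pieces. Since $\jj\partial_t(A\phi_\xi e^{-\jj\omega t})=A\omega\phi_\xi e^{-\jj\omega t}$ and $\jj\partial_t(B\chi_\eta e^{\jj\omega t})=-B\omega\chi_\eta e^{\jj\omega t}$, the two profile identities $\omega\phi_\xi=(D_m-f\beta)\phi_\xi$ and $-\omega\chi_\eta=(D_m-f\beta)\chi_\eta$ combine linearly to give $\jj\partial_t\theta=(D_m-f\beta)\theta$. By the first step the coefficient $f$ is precisely $f(\theta^\ast\beta\theta)$, so this is the full nonlinear equation \eqref{nld}; the Dirac--Klein--Gordon case \eqref{dkg} follows identically, the shared static $\Phi$ being the same as for the one-frequency wave.

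I expect the main obstacle to be the direct computation in the second step — keeping track of $\sigma_r$ versus $\sigma_r^\ast$ and of the signs so that the frequency reverses while the identical ODE system \eqref{def-v-u} is recovered. One could instead attempt to deduce this step from the Bogoliubov symmetry of Lemma~\ref{lemma-ab}, applying $B\bmK$ to the one-frequency wave to generate the $e^{\jj\omega t}$ branch; however, a matrix $B$ obeying \eqref{b-such} need not exist for general $n$ (for instance $n=N=4$), whereas the lemma is claimed in every dimension. The robust route is therefore the direct verification, whose only inputs are the algebraic identities $\sigma_r\sigma_r^\ast=I_{N/2}$ and $\sum_j\sigma_j\sigma_j^\ast=nI_{N/2}$ and the vanishing of the cross terms, both of which hold for all $n$.
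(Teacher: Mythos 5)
Your proposal is correct and follows essentially the same route as the paper: direct substitution, checking that the cross terms in $\theta^\ast\beta\theta$ vanish so the nonlinearity sees only $v^2-u^2$, and verifying that $\chi_\eta$ satisfies the stationary equation with frequency $-\omega$ by reduction to the same radial system \eqref{def-v-u}. The only (immaterial) difference is bookkeeping in the derivative term: the paper writes $u=\p_r U$ and uses $(\sigma\sp\ast\cdot\nabla)(\sigma\cdot\nabla)=1\sb{N/2}\Delta$, whereas you differentiate $\sigma_r\sp\ast$ directly via the product rule and $\sum_j\sigma_j\sigma_j\sp\ast=n 1\sb{N/2}$.
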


\begin{proof}
The lemma is verified by the direct substitution.
First one checks that
\[
\theta\sb{\Xi,\Eta}\sp\ast\beta\theta\sb{\Xi,\Eta}
=
\abs{\Xi}^2(v^2-u^2)
+\abs{\Eta}^2(u^2-v^2)
=v^2-u^2=\phi\sb\xi\sp\ast\beta\phi\sb\xi.
\]
It remains to prove that the relation
$\omega\phi\sb\xi=D_m\phi\sb\xi-\beta f\phi\sb\xi$
implies the relation
$-\omega\chi\sb\eta=D_m\chi\sb\eta-\beta f\chi\sb\eta$,
with $\chi\sb\eta(x)$ defined in \eqref{def-chi},
for any $\eta\in\C^{N/2}$,
$\abs{\eta}=1$.
With $D_m$ built with the Dirac matrices from \eqref{general-alpha},
the above relations on $\phi\sb\xi$ and $\chi\sb\eta$
are written explicitly as follows:
\begin{eqnarray}
\omega
\begin{bmatrix}
v\xi\\\jj\sigma_r u\xi
\end{bmatrix}
&=&
\begin{bmatrix}
0&-\jj\sigma\sp\ast\cdot\nabla
\\
-\jj\sigma\cdot\nabla&0
\end{bmatrix}
\begin{bmatrix}
v\xi\\\jj\sigma_r u\xi
\end{bmatrix}
+
(m-f)
\begin{bmatrix}
v\xi\\-\jj\sigma_r u\xi
\end{bmatrix},
\nonumber
\\
\nonumber
-\omega
\begin{bmatrix}
-\jj\sigma_r\sp\ast u\eta\\v\eta
\end{bmatrix}
&=&
\begin{bmatrix}
0&-\jj\sigma\sp\ast\cdot\nabla
\\
-\jj\sigma\cdot\nabla&0
\end{bmatrix}
\begin{bmatrix}
-\jj\sigma_r\sp\ast u\eta\\v\eta
\end{bmatrix}
+
(m-f)
\begin{bmatrix}
-\jj\sigma_r\sp\ast u\eta\\-v\eta
\end{bmatrix};
\nonumber
\end{eqnarray}
each of these relations is equivalent to the system \eqref{def-v-u}
(with $f$
given by \eqref{f-v-u}
for the nonlinear Dirac equation
or \eqref{f-not-v-u} for Dirac--Klein--Gordon system)
once we take into account that
$\sigma\cdot\nabla v=\sigma_r \p_r v$,
\ $\sigma\sp\ast\cdot\nabla v=\sigma_r\sp\ast \p_r v$,
\ and
\[
(\sigma\sp\ast\cdot\nabla)\sigma_r u
=
(\sigma\sp\ast\cdot\nabla)\sigma_r
\p_r U
=
(\sigma\sp\ast\cdot\nabla)
(\sigma\cdot\nabla)U
=\Delta U=\p_r u+\frac{n-1}{r}u,
\]
where we introduced $U(r)=\int_0^r u(s)\,ds$
and used the identity
$
(\sigma\sp\ast\cdot\nabla)(\sigma\cdot\nabla)
=1\sb{N/2}\Delta
$
which follows from \eqref{general-sigma};
similarly,
one has
$(\sigma\cdot\nabla)\sigma_r\sp\ast u
=\p_r u+\frac{n-1}{r}u$.
\end{proof}

Assume that in \eqref{nld}
$f\in C^1(\R\setminus\{0\})\cap C(\R)$,
so that the linearization at a solitary wave makes sense.

\begin{corollary}\label{corollary-bi}
The linearization at a
(one-frequency)
solitary wave
has eigenvalues $\pm 2\omega\jj$
of geometric multiplicity (at least) $N/2$.
\end{corollary}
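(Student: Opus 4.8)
The plan is to realize the eigenfunctions as tangent vectors to the manifold of bi-frequency solitary waves produced by Lemma~\ref{lemma-bi}, exactly as anticipated in the discussion preceding the statement. Fix $\xi\in\C^{N/2}$ with $\abs{\xi}=1$, let the one-frequency wave be $\psi_0(t,x)=\phi_\xi(x)e^{-\jj\omega t}$, and pass to the co-rotating frame by writing $\psi=\tilde\psi\,e^{-\jj\omega t}$, so that $\tilde\psi$ solves the \emph{autonomous} equation $\jj\p_t\tilde\psi=(D_m-\omega)\tilde\psi-f(\tilde\psi^\ast\beta\tilde\psi)\beta\tilde\psi$, for which $\phi_\xi$ is a stationary solution. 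The linearization of this autonomous equation about $\phi_\xi$ is a time-independent real-linear operator $\mathbf{A}$, with $\p_t\rho=\mathbf{A}\rho$, and by definition its eigenvalues are the eigenvalues of the linearization at the solitary wave; explicitly,
\[
\mathbf{A}\rho
=-\jj\big[(D_m-\omega)\rho-f(\phi_\xi^\ast\beta\phi_\xi)\beta\rho
-2f'(\phi_\xi^\ast\beta\phi_\xi)\Re\{\phi_\xi^\ast\beta\rho\}\beta\phi_\xi\big].
\]

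Next I would take a curve inside the solution family. For $\eta\in\C^{N/2}$, $\abs{\eta}=1$, and small $\epsilon$, set $\Xi=\sqrt{1+\epsilon^2}\,\xi$ and $\Eta=\epsilon\eta$, so that $\abs{\Xi}^2-\abs{\Eta}^2=1$ and Lemma~\ref{lemma-bi} applies. Written in the co-rotating frame, the corresponding exact solution is $\sqrt{1+\epsilon^2}\,\phi_\xi(x)+\epsilon\,\chi_\eta(x)e^{2\jj\omega t}$. Since every member of this curve solves the equation exactly, differentiating at $\epsilon=0$ yields a solution of the linearized equation, namely $\rho(t,x)=\chi_\eta(x)e^{2\jj\omega t}$ (the $\phi_\xi$-component enters only at order $\epsilon^2$). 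Because $\mathbf{A}$ is autonomous, the $e^{2\jj\omega t}$ time dependence identifies $\chi_\eta$ as an eigenvector with eigenvalue $2\omega\jj$.

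To record this eigenvalue directly, and to see why the non-$\C$-linear part of $\mathbf{A}$ does not interfere, I would note that the only non-holomorphic term above is proportional to $\Re\{\phi_\xi^\ast\beta\rho\}\beta\phi_\xi$, and that $\phi_\xi^\ast\beta\chi_\eta=0$ by the block structure of $\phi_\xi$ and $\chi_\eta$ (this is the same cancellation behind the identity $\theta_{\Xi,\Eta}^\ast\beta\theta_{\Xi,\Eta}=v^2-u^2$ already established in the proof of Lemma~\ref{lemma-bi}). Hence this term annihilates $\chi_\eta$, and using the relation $D_m\chi_\eta-f(\phi_\xi^\ast\beta\phi_\xi)\beta\chi_\eta=-\omega\chi_\eta$ from Lemma~\ref{lemma-bi} one finds $\mathbf{A}\chi_\eta=2\omega\jj\,\chi_\eta$ on the nose.

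Finally, for the multiplicity and for the second eigenvalue: the map $\eta\mapsto\chi_\eta$ is complex-linear and injective (its lower block is $v(r)\eta$ with $v\not\equiv0$), so the eigenspace of $2\omega\jj$ contains the $N/2$-dimensional complex subspace $\{\chi_\eta:\eta\in\C^{N/2}\}$, giving geometric multiplicity at least $N/2$. Since $\mathbf{A}$ is a real operator (it preserves the real structure $\C^N\cong\R^{2N}$ of the perturbations), its spectrum is invariant under complex conjugation; applied to $2\omega\jj$ this produces the eigenvalue $\overline{2\omega\jj}=-2\omega\jj$, whose eigenspace is the complex conjugate $\{\overline{\chi_\eta}\}$ and therefore also has dimension at least $N/2$. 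The point to be careful about is precisely this real-linearity of $\mathbf{A}$: one must verify that the quadratic self-interaction contributes nothing along these directions, which is exactly what the orthogonality $\phi_\xi^\ast\beta\chi_\eta=0$ guarantees.
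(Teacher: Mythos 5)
Your proposal is correct and takes essentially the same route as the paper: both use the exact bi-frequency family of Lemma~\ref{lemma-bi} with $\Xi=(1+\epsilon^2)^{1/2}\xi$, $\Eta=\epsilon\eta$, read off that $\chi_\eta(x)e^{2\jj\omega t}$ solves the equation linearized at $\phi_\xi e^{-\jj\omega t}$ (hence $2\omega\jj$ is an eigenvalue, with multiplicity $N/2$ coming from the arbitrariness of $\eta$), and obtain $-2\omega\jj$ from the symmetry of the spectrum; your direct verification $\mathbf{A}\chi_\eta=2\omega\jj\chi_\eta$ via $\phi_\xi\sp\ast\beta\chi_\eta=0$ and the injectivity of $\eta\mapsto\chi_\eta$ merely make explicit what the paper leaves implicit. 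One caution on your last step: the $-2\omega\jj$ eigenvectors live in the complexification of $L^2(\R^n,\C^N)\cong L^2(\R^n,\R^{2N})$, not in $L^2(\R^n,\C^N)$ itself --- indeed $\phi_\xi\sp\ast\beta(\bmK\chi_\eta)\ne 0$ in general, so $(\bmK\chi_\eta)e^{-2\jj\omega t}$ does \emph{not} solve the real-linear linearized equation --- but your multiplicity count survives, since conjugation in the complexification is an antilinear bijection between the eigenspaces of $\lambda$ and $\bar\lambda$.
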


\begin{proof}
Since
$
\psi(t,x)
=
\big(
(1+\epsilon^2)^{1/2}\phi\sb\xi(x)
+\epsilon \chi\sb\eta(x)e^{2\jj\omega t}
\big)e^{-\jj\omega t},
$
for any
$0<\epsilon\ll 1$
and $\xi,\,\eta\in\C^{N/2}$,
satisfies the nonlinear Dirac equation \eqref{nld},
one concludes that
$r(t,x)=\chi\sb\eta(x)e^{2\jj\omega t}$
is a solution to the nonlinear Dirac equation
linearized
at $\phi\sb\xi e^{-\jj\omega t}$.
This shows that
$2\omega\jj$
is an eigenvalue of the linearization.
Due to the symmetry of the spectrum
with respect to $\Re\lambda=0$ and $\Im\lambda=0$,
so is $-2\omega\jj$.
\end{proof}

\begin{remark}
The presence of the eigenvalues $\pm 2\omega\jj$
in the spectrum of the linearization of the Soler model
at a solitary wave
was noticed in \cite{MR2892774}
(initially in the one-dimensional case)
and eventually led to the conclusion
that there exist bi-frequency solitary waves.
We note that the existence of such bi-frequency solutions
could have already been deduced
applying the Bogoliubov transformation \eqref{def-g-b}
from \cite{MR0503135}
to one-frequency solitary waves
$\phi(x)e^{-\jj\omega t}$
which were constructed in \cite{PhysRevD.1.2766}.
\end{remark}

We define the solitary manifold
of one- and bi-frequency solutions of the form
\eqref{bi-frequency-solitary-wave}
corresponding to some value $\omega$
by
\begin{eqnarray}
\mathscr{M}\sb\omega
=\left\{
\theta\sb{\Xi,\Eta}(t,x)
\sothat
\quad
\quad\Xi,\,\Eta\in\C^{N/2},
\quad
\abs{\Xi}^2-\abs{\Eta}^2=1
\right\}.
\end{eqnarray}
In general,
the solitary manifold $\mathscr{M}\sb\omega$
can be larger than the orbit of $\phi e^{-\jj\omega t}$
under the action of the available symmetry groups:
$G_{\mathrm{Bogoliubov}}$
defined in \eqref{def-g-b}
and $\mathbf{SO}(n)$;
we denote this orbit by
\[
\mathcal{O}\sb{\phi}
=
\left\{
r g 
\left(\phi e^{-\jj\omega t}\right)
\sothat
r\in \mathbf{SO}(n),
\ \ g\in
G\sb{\mathrm{Bogoliubov}}
\right\}
\subset\mathscr{M}\sb\omega.
\]

\begin{remark}
We do not consider translations and Lorentz boosts,
thus preserving the spatial location of the solitary wave,
just like it is preserved in the definition of $\mathscr{M}\sb\omega$.
\end{remark}


In lower spatial dimensions $n\le 2$,
when $N=2$,
the orbit $\mathcal{O}\sb{\phi}$ is given by
\[
\mathcal{O}\sb{\phi}
=\{g(\phi(x)e^{-\jj\omega t})
\sothat g\in G\sb{\mathrm{Bogoliubov}}\},
\]
with
$
G\sb{\mathrm{Bogoliubov}}
=\big\{
a+b\sigma_1\bmK,\ a,\,b\in\C,\ \abs{a}^2-\abs{b}^2=1
\big\};
$
thus,
$\mathcal{O}\sb{\phi}$
coincides with the solitary manifold
\[
\mathscr{M}\sb\omega
=\left\{
a\begin{bmatrix}v(x)\\u(x)\end{bmatrix}e^{-\jj\omega t}
+
b\begin{bmatrix}u(x)\\v(x)\end{bmatrix}e^{\jj\omega t}
\sothat
a,\,b\in\C,\ \abs{a}^2-\abs{b}^2=1
\right\}.
\]
In three spatial dimensions,
$n=3$ and $N=4$,
the solitary manifold $\mathscr{M}\sb\omega$
is larger than
the orbit $\mathcal{O}\sb{\phi}$
of $\phi(x)e^{-\jj\omega t}$
under the action of the available symmetry groups:
spatial rotations
$\mathbf{SO}(n)$
and the Bogoliubov group
$G_{\mathrm{Bogoliubov}}$
given by elements of the form
$a+b(-\jj\gamma^2)\bmK$,
where $a,\,b\in\C$, $\abs{a}^2-\abs{b}^2=1$.
Indeed, one has
$\mathcal{O}\sb{\phi}\subsetneq\mathscr{M}\sb\omega$,
since
\[
\dim\sb\R \mathcal{O}\sb{\phi}
=5
<\dim\sb\R \mathbf{SO}(3)+\dim\sb\R G_{\mathrm{Bogoliubov}}
=3+3,
\]
\[
\dim\sb\R \mathscr{M}\sb\omega
=
\dim\sb\R\{
(\Xi,\Eta)\in\C^{N/2}\times\C^{N/2}
\sothat
\abs{\Xi}^2-\abs{\Eta}^2=1
\}
=
2N-1=7.
\]
Note that
in the above inequality
for $\dim\sb\R \mathcal{O}\sb{\phi}$
one has ``strictly smaller'',
since the generator corresponding to
the standard $\mathbf{U}(1)$-invariance
which enters the Lie algebra of $G_{\mathrm{Bogoliubov}}$
also coincides with the generator of rotation around $z$-axis.

In the case $n=4$, $N=4$,
the symmetry group simplifies to $\mathbf{U}(1)$
since the generator
$-\jj\gamma^2\bmK$ is no longer available:
the Dirac operator now contains
$-\jj\alpha^4\p_{x\sp 4}=\beta\gamma^5\p_{x\sp 4}$
(with
$\alpha^4:=\begin{bmatrix}0&-\jj 1\sb{2}\\\jj 1\sb{2}&0\end{bmatrix}
=-\jj\beta\gamma^5$),
which breaks the anticommutation of $D_m$ with
$-\jj\gamma^2\bmK$:
\[
\{
-\jj\alpha^4,
\jj\alpha^2\beta\bmK
\}
=
\{
\beta\gamma^5,
\jj\alpha^2\beta\bmK
\}
=
\beta\gamma^5
\jj\alpha^2\beta\bmK
+
\jj\alpha^2\beta\bmK
\beta\gamma^5
\]
\[
=
\jj\gamma^5\alpha^2\bmK
+\jj\alpha^2\gamma^5\bmK
=
2\jj\alpha^2\gamma^5\bmK
\ne 0.
\]
Again,
$\mathcal{O}\sb{\phi}\subsetneq\mathscr{M}\sb\omega$
since
\[
\dim\sb\R \mathcal{O}\sb{\phi}
\le
\dim\sb\R \mathbf{SO}(4)
+\dim\sb\R G_{\mathrm{Bogoliubov}}
\le 6+1,
\qquad
\dim\sb\R\mathscr{M}\sb\omega
=
2N-1=7,
\]
with the Lie algebras of
$\mathbf{SO}(4)$
and $G_{\mathrm{Bogoliubov}}$
sharing one element (a generator of the
standard $\mathbf{U}(1)$-symmetry).
Moreover, the action of $\mathbf{SO}(4)$
($\dim_{\R}=6$)
on
$\C^2$
($\dim_{\R}=4$)
could not be \emph{faithful};
the orbit of an element $\xi\in\C^2$
under the action of $\mathbf{SO}(4)$
is only three-dimensional.
As a result, in the case $n=4$, $N=4$, one has
\[
\dim\sb\R\mathcal{O}\sb{\phi}=3,
\qquad
\dim\sb\R\mathscr{M}\sb\omega=2N-1=7.
\]

\begin{remark}
We can rephrase the above situation in the following way.
When moving from $n=3$ to $n=4$,
additional rotations in $\R^4$ do not add to the
orbit of $\xi\in\C^2$ which has already been of maximal
dimension
when $n=3$
(which equals three:
it is the real dimension of the unit sphere in $\C^2$),
while the loss
of the generator $B\bmK$
from the Bogoliubov group
led to the loss of two real dimensions
of the orbit $\mathcal{O}\sb{\phi}$.
\end{remark}

\begin{remark}
Let us briefly discuss the pseudo-scalar theories
in spatial dimension $n=3$.
Instead of the (scalar) Yukawa interaction,
given by the term
$\phi\bar\psi\psi$ in the Lagrangian,
one can consider pseudoscalar interaction,
introducing the term
$\phi\bar\psi\jj\gamma^5\psi$,
which we write as
$-\phi\psi\sp\ast\alpha^4\psi$
with
\[
\alpha^4=-\jj\beta\gamma^5
=
\begin{bmatrix}0&-\jj I_2\\\jj I_2&0\end{bmatrix}.
\]
The Bogoliubov symmetry $\mathbf{SU}(1,1)$
is no longer present in a model with such an interaction.
For $g=a-\jj b\gamma^2\bmK$, $\abs{a}^2-\abs{b}^2=1$,
omitting taking the real part in the intermediate computations,
one has:
\begin{eqnarray}
\nonumber
&&
(g\psi)\sp\ast\alpha^4(g\psi)
=\Re (g\psi)\sp\ast\alpha^4(g\psi)
=\Re ((a-\jj b\gamma^2\bmK)\psi)\sp\ast\alpha^4(a-\jj b\gamma^2\bmK)\psi
\\
\nonumber
&&
=
\Re
\psi\sp\ast
(\bar a-\bmK \jj \bar b\gamma^2)\alpha^4
(a-\jj b\gamma^2\bmK)\psi
=
\Re
\psi\sp\ast
(\bar a-\jj b\gamma^2\bmK)
\alpha^4
(a-\jj b\gamma^2\bmK)\psi
\\
\nonumber
&&
=
\Re
\psi\sp\ast
\alpha^4
(\bar a-\jj b\gamma^2\bmK)
(a-\jj b\gamma^2\bmK)\psi
\\
\nonumber
&&
=
\Re
\psi\sp\ast
\alpha^4
(\abs{a}^2+\abs{b}^2
-2\jj\bar a b\gamma^2\bmK)
\psi
=
\psi\sp\ast
\alpha^4
(\abs{a}^2+\abs{b}^2)
\psi,
\end{eqnarray}
which in general is different from
$\psi\sp\ast\alpha^4\psi$.
Above, in the last equality,
we took into account that
the matrix
$\alpha^4\gamma^2
=\begin{bmatrix}\jj\sigma_2&0\\0&\jj\sigma_2\end{bmatrix}
$
is antisymmetric,
hence
$
\psi\sp\ast\alpha^4\gamma^2\bmK\psi
=
(\bmK\psi)\sp{\mathrm{t}}\alpha^4\gamma^2\bmK\psi
=0
$.
\end{remark}

\section{Spectral stability of bi-frequency solitary waves}
\label{sect-bi-frequency-stability}

\begin{definition}\label{def-linear-stability}
We say that the bi-frequency solitary wave solution
$\theta\sb{\Xi,\Eta}(t,x)$
(see \eqref{bi-frequency-solitary-wave})
to
\eqref{nld} or \eqref{dkg}
is \emph{linearly unstable}
if there are nonzero functions $\rho_j\in L^2(\R^n,\C^N)$
and numbers $\Lambda_j\in\C$,
$1\le j\le J$, $J\ge 1$,
with
$\Lambda_i\ne\Lambda_j$ except when $i=j$
and with $\Re\Lambda_j>0$,
such that
\begin{eqnarray}\label{ansatz-unstable}
\theta\sb{\Xi,\Eta}(t,x)
+\epsilon
\sum\sb{j=1}\sp{J}\rho_j(x)e^{\Lambda_j t}
\end{eqnarray}
solves
\eqref{nld}
or \eqref{dkg}
up to $o(\epsilon)$,
$0<\epsilon\ll 1$.
Otherwise,
we call the bi-frequency solitary wave solution
\emph{spectrally stable}.
\end{definition}

\begin{theorem}\label{theorem-equiv}
Let $n\le 4$, $N=2$ or $N=4$.
Let \eqref{u-less-v} be satisfied.
Then the bi-frequency solitary wave
\eqref{bi-frequency-solitary-wave}
is spectrally stable as long as
the corresponding one-frequency solitary wave solution
\eqref{one-frequency}
is spectrally stable.
\end{theorem}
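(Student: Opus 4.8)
The plan is to compare the linearization at the bi-frequency wave $\theta_{\Xi,\Eta}$ directly with the linearization at the one-frequency wave $\phi_\xi e^{-\jj\omega t}$. Writing $\psi=\theta_{\Xi,\Eta}+\epsilon W$ in \eqref{nld} and collecting the $O(\epsilon)$ terms, the linearized equation is
\[
\jj\p_t W=D_m W-f_0\beta W-f_0'\,(\theta_{\Xi,\Eta}^\ast\beta W+W^\ast\beta\theta_{\Xi,\Eta})\,\beta\theta_{\Xi,\Eta},
\]
where $f_0=f(v^2-u^2)$ and $f_0'=f'(v^2-u^2)$ are functions of $r$ only. The crucial point, already contained in the proof of Lemma~\ref{lemma-bi}, is that $\theta_{\Xi,\Eta}^\ast\beta\theta_{\Xi,\Eta}=v^2-u^2$ at \emph{every} point of the solitary manifold $\mathscr{M}_\omega$; hence the diagonal operator $\jj\p_t-D_m+f_0\beta$ is one and the same for the bi-frequency and the one-frequency wave, and only the flavour-rank-one coupling term (always a multiple of $\beta\theta_{\Xi,\Eta}$) changes. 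Passing to the co-rotating frame $W=e^{-\jj\omega t}Z$ and writing $\theta_{\Xi,\Eta}=\phi_\Xi e^{-\jj\omega t}+\chi_\Eta e^{\jj\omega t}$, the one-frequency coupling is recovered from the time-independent part, while the genuinely bi-frequency contribution appears as time-periodic harmonics $e^{\pm 2\jj\omega t}$, $e^{\pm 4\jj\omega t}$ carried by $\chi_\Eta$. The statement thus reduces to showing that switching on $\chi_\Eta$ (subject to $\abs{\Xi}^2-\abs{\Eta}^2=1$) neither creates nor destroys exponentially growing modes.

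On the part of $\mathscr{M}_\omega$ lying on the symmetry orbit $\mathcal{O}_\phi$ this is immediate. A spatial rotation $r\in\mathbf{SO}(n)$ and a Bogoliubov element $g=a+bB\bmK\in G_{\mathrm{Bogoliubov}}$ are genuine symmetries of \eqref{nld}, \eqref{dkg} by Lemma~\ref{lemma-ab}: they map solutions to solutions and, being $\R$-linear and invertible, they intertwine the linearization at $\phi_\xi e^{-\jj\omega t}$ with that at $rg(\phi_\xi e^{-\jj\omega t})$, so the two operators are conjugate and their growing modes are in bijection. Since for $N=2$ one has $\mathcal{O}_\phi=\mathscr{M}_\omega$, this already settles that case. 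For $N=4$ I would first use $\mathbf{SO}(n)$ to fix $\xi=\Xi/\abs{\Xi}$ to a reference $\xi_0$ and, when the Bogoliubov symmetry is present, remove the component of $\Eta$ along the charge-conjugate direction, reducing the general bi-frequency wave to the normal form $\Eta\parallel\Xi$.

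It remains to treat this off-orbit normal form — and, when $n=4$, $N=4$, the whole family, since there no $B$ satisfies \eqref{b-such} and the Bogoliubov group is absent. Here I would analyse the co-rotating linearized operator directly as a Floquet problem, seeking $Z=e^{\Lambda t}\sum_k c_k(x)e^{2\jj k\omega t}$ (coupled to the conjugate ladder through the $W^\ast$ term), and show that the unstable Floquet exponents $\Lambda$ with $\Re\Lambda>0$ are exactly the unstable eigenvalues of the autonomous one-frequency operator obtained by setting $\chi_\Eta=0$. The mechanism I would exploit is that the coupling term always points along $\beta\theta_{\Xi,\Eta}$, so after decomposing the spinor into spinor-spherical-harmonic blocks the background occupies a single low block; it is precisely here that $N\le 4$, i.e. $\dim_\C\C^{N/2}\le 2$, enters, since only then is the flavour space small enough that $\sigma_r$ acts with one-dimensional eigenspaces and the block reduction to an $N=2$-type coupling closes. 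Matching that two-dimensional block to the one-frequency coupling then forces the coincidence of the unstable spectra.

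I expect the main obstacle to be exactly this last step. For $N>4$ the flavour space $\C^{N/2}$ carries additional directions into which $\Eta$ may point, the off-orbit part of $\mathscr{M}_\omega$ gains extra dimensions, and the clean reduction of the Floquet problem to a single one-frequency block breaks down; this is why the argument is confined to $N\le 4$. A secondary technical point, which I would verify from the explicit form of the transformation together with the decay of $u$ and $v$, is that every conjugation or intertwiner used is bounded with bounded inverse on $L^2(\R^n,\C^N)$, so that the full spectra — and not merely the eigenvalues — are preserved and the equivalence of spectral stability in the sense of Definition~\ref{def-linear-stability} follows.
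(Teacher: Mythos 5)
Your argument for $N=2$ ($n\le 2$) is exactly the paper's: there $\mathscr{M}_\omega=\mathcal{O}_\phi$, so the bi-frequency wave is a symmetry image $g\left[\phi e^{-\jj\omega t}\right]$, and the real-linear, bounded, invertible map $g$ conjugates the two linearizations, putting growing modes in bijection. For $N=4$, however, your proposal has a genuine gap, and it sits precisely where you say you expect one. In the off-orbit case — which for $n=4$, $N=4$ is the \emph{whole} case, since no $B$ satisfies \eqref{b-such} and your preliminary Bogoliubov normal-form reduction is unavailable — the entire content of the theorem is the claim that the unstable Floquet exponents of the time-periodic linearized problem coincide with the unstable eigenvalues of the autonomous one-frequency linearization. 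You assert this (``matching that two-dimensional block \dots forces the coincidence of the unstable spectra'') but give no argument. Moreover, the mechanism you propose as the reason for the restriction $N\le 4$ (a spinor-spherical-harmonic block reduction hinging on $\sigma_r$ having one-dimensional eigenspaces) is not substantiated and is not, in fact, where the hypothesis enters.

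What the paper does at this point is different and concrete. It expands the perturbation over the moving frame $\phi_j,\chi_j$ of \eqref{def-phi-chi-j}, i.e.\ the Ansatz \eqref{linearization-ansatz} with scalar coefficients $(p_j,q_j,r_j,s_j)$, and proves (Proposition~\ref{prop-invariant-subspace}) that the subspace $r_j=\bar q_j$, $s_j=\bar p_j$ is invariant under the linearized flow. On that subspace the frequency-mixing condition \eqref{no-frequency-mixing} holds automatically (Remark~\ref{remark-no-mixing}), and, since $s_1=\bar p_1$ and $\abs{a}^2-\abs{b}^2=1$, the coupling term $\Re\big(\abs{a}^2\bar p_1-\abs{b}^2 s_1\big)$ collapses to $\Re \bar p_1$: the closed system for $(p_j,q_j)$ no longer involves $(a,b)$ at all, hence coincides with the linearization at the one-frequency wave (the case $a=1$, $b=0$). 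The invariance itself rests on the matrix-element identities of Lemma~\ref{lemma-matrix-elements-general}, whose proof uses that $M=R^\ast\sigma_i^\ast S$ with $R,\,S\in\mathbf{SU}(2)$ lies in $\mathbf{U}(2)$ with $\det M=-1$, whence $M_{12}=\overline{M_{21}}$; this is the actual place where $N\le 4$ is needed, and it fails for $N>4$. Finally, Lemma~\ref{lemma-lin-independent} (this is where \eqref{u-less-v} is used) and Lemma~\ref{lemma-r-s} (a contraction argument exploiting $\abs{b/a}<1$) show that \emph{every} $L^2$ perturbation decomposes uniquely and boundedly into this invariant form, so the equivalence of spectral stability holds for arbitrary perturbations, not only those prepared in the invariant subspace. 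To complete your Floquet route you would need to supply a step of comparable strength — some structural identity forcing the ladder of harmonics to decouple onto a copy of the one-frequency problem — and nothing in your sketch currently provides it.
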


\begin{proof}
Let us first give the proof of Theorem~\ref{theorem-equiv}
in the simple case,
\[
n\le 2,\qquad N=2,
\]
when the argument could be based
on reduction
with the aid of the Bogoliubov transformation.
In this case,
by the above considerations,
the solitary manifold $\mathscr{M}\sb\omega$
coincides with the orbit
$\mathcal{O}\sb{\phi}$
of a one-frequency solitary wave
under the action of the symmetry group of the equation.
Therefore,
a bi-frequency wave
$\theta(t,x)=a\phi(x)e^{-\jj\omega t}+b\chi(x)e^{\jj\omega t}$,
$a,\,b\in\C$, $\abs{a}^2-\abs{b}^2=1$,
can be written in the form
$\theta(t,x)
=a\phi e^{-\jj\omega t}
+b B(\bmK\phi)e^{\jj\omega t}
=g\left[\phi(x)e^{-\jj\omega t}\right]
$.
The perturbation \eqref{ansatz-unstable}
of the bi-frequency solitary wave
can be written in the form
\[
a\phi e^{-\jj\omega t}
+
b\chi e^{-\jj\omega t}
+\sum\sb{j=1}\sp{J}\rho_j(x)e^{\Lambda_j t}
=
g\left[
(\phi(x)+\varrho(t,x))e^{-\jj\omega t}\right],
\]
where we take
$\varrho(t,x)
=e^{\jj\omega t}
g^{-1}\left[\sum\sb{j=1}\sp{J}\rho_j(x)e^{\Lambda_j t}\right]$,
with
$g^{-1}=\bar a-b B\bmK$.
The above relation shows that
the exponential growth of
$\sum\sb{j=1}\sp{J}\rho_j(x)e^{\Lambda_j t}$
is in one-to-one correspondence to the exponential growth of
$\varrho(t,x)$.
As a result,
the spectral stability of the bi-frequency wave
$a\phi(x)e^{-\jj\omega t}+b\chi(x)e^{\jj\omega t}$
(cf. Definition~\ref{def-linear-stability})
takes place if and only if
the corresponding one-frequency solitary wave
$\phi(x)e^{-\jj\omega t}$
is spectrally stable.
This completes the proof in the case $n\le 2$, $N=2$.

\medskip

Now we assume that
\[
n\le 4,
\qquad
N=4.
\]
Given
$\Xi,\,\Eta\in\C^{N/2}\setminus\{0\}$ such that
$\abs{\Xi}^2-\abs{\Eta}^2=1$,
let
\begin{eqnarray}\label{bsw}
\theta\sb{\Xi,\Eta}(t,x)
=
a\phi\sb{\xi}(x)e^{-\jj\omega t}
+b\chi\sb{\eta}(x)e^{\jj\omega t},
\end{eqnarray}
with
$a=\abs{\Xi}$,
$b=\abs{\Eta}$,
$\xi=\Xi/\abs{\Xi}$
and $\eta=\Eta/\abs{\Eta}$
(with $\phi\sb\xi$ and $\chi\sb\eta$
from \eqref{def-phi} and \eqref{def-chi},
respectively)
be a bi-frequency solitary wave.
We will consider the perturbation of this solitary wave
in the form
\[
\psi(t,x)
=
a(\phi\sb{\xi}(x)+\rho(t,x))e^{-\jj\omega t}
+b(\chi\sb{\eta}(x)+\sigma(t,x))e^{\jj\omega t},
\]
where we will impose the following condition
on $\rho(t,x)$ and $\sigma(t,x)$:
\begin{eqnarray}\label{no-frequency-mixing-0}
\bar a b \phi_\xi\sp\ast\beta \sigma
+\bar a  b \rho\sp\ast\beta\chi_\eta=0.
\end{eqnarray}
If \eqref{no-frequency-mixing-0} is satisfied,
we will say that there is no \emph{frequency mixing};
in this case,
$\psi\sp\ast\beta\psi$ does not
contribute terms with factors
$e^{-2\jj\omega t}$ or $e^{2\jj\omega t}$.
We will show that indeed there is a way to split the
perturbation into $\rho(t,x)$ and $\sigma(t,x)$ so that
\eqref{no-frequency-mixing-0} is satisfied
(see
Proposition~\ref{prop-invariant-subspace}
and Remark~\ref{remark-no-mixing} below).

Let
$(\e_j)\sb{1\le j\le N/2}$ be the standard basis
in $\C^{N/2}$ and let
$R,\,S\in\mathbf{SU}(N/2)$
be such that
$\xi=R\e_1$, $\eta=S\e_1$.
Denote
\begin{eqnarray}\label{def-phi-chi-j}
\phi_j(x)
=\begin{bmatrix}v(r) R\e_j\\\jj u(r)\sigma_r R\e_j\end{bmatrix},
\qquad
\chi_j(x)
=\begin{bmatrix}-\jj u(r)\sigma_r\sp\ast S\e_j\\v(r) S\e_j\end{bmatrix},
\qquad
1\le j\le N/2.
\end{eqnarray}
(Above, we do not indicate the dependence of $v$, $u$ of $\omega$.)
We consider the perturbation of the bi-frequency solitary wave \eqref{bsw}
in the form
\begin{eqnarray}\label{linearization-ansatz}
\psi(t,x)
=
a\Big(
\phi_1(x)+\sum\sb{j=1}\sp{N/2}
\big(p_j(t,x)
\phi_j(x)
+q_j(t,x)
\chi_j(x)
\big)
\Big)e^{-\jj\omega t}
\nonumber
\\
\quad +\,b\Big(\chi_1(x)
+\sum\sb{j=1}\sp{N/2}
\big(r_j(t,x)
\phi_j(x)
+s_j(t,x)
\chi_j(x)\big)\Big)e^{\jj\omega t}.
\end{eqnarray}
Above, $p_j,\,q_j,\,r_j,\,s_j$,
$1\le j\le N/2$, are complex scalar-valued
functions of $x$ and $t$.
The condition \eqref{no-frequency-mixing-0}
of the absence of \emph{frequency mixing}
takes the form
\begin{eqnarray}\label{no-frequency-mixing}
\bar a b
\big(
\,\overline{q_1(t,x)}-r_1(t,x)
\big)=0.
\end{eqnarray}
As long as \eqref{no-frequency-mixing}
is satisfied,
the linearized terms in the expansion of
$\psi\sp\ast\beta\psi$
do not contain factors
$e^{\pm 2\jj\omega t}$;
the ones that are left are given by
\begin{eqnarray}\label{two-re}
2\Re(\dots)
=2\Re\big(
\abs{a}^2\phi_1\sp\ast\beta\phi_1 \bar p_1
+\abs{b}^2
\chi_1\sp\ast\beta\chi_1 s_1
\big)
=2\phi_1\sp\ast\beta\phi_1
\Re(\abs{a}^2 \bar p_1-\abs{b}^2 s_1).
\end{eqnarray}
The linearized equation will contain two groups of terms,
with factors $e^{\pm \jj\omega t}$;
to satisfy the linearized equation,
it is enough to equate these groups
separately.
The terms with the factor $e^{-\jj\omega t}$:
\begin{eqnarray}
\label{rel-1}
\jj\p_t p_k \phi_k
+\jj\p_t q_k \chi_k
-2\omega q_k \chi_k
=
(D_0 p_k)\phi_k
+(D_0 q_k)\chi_k
-2f'\Re(\dots)\beta\phi_1;
\end{eqnarray}
the terms with the factor $e^{\jj\omega t}$:
\begin{eqnarray}
\label{rel-2}
\jj\p_t r_k \phi_k
+\jj\p_t s_k \chi_k
+2\omega r_k \phi_k
=
(D_0 r_k)\phi_k
+(D_0 s_k)\chi_k
-2f'\Re(\dots)\beta\chi_1.
\end{eqnarray}

\begin{remark}
When deriving the above equations,
we eliminated terms with the derivatives of
$\phi_j$ and $\chi\sb j$
by using the stationary Dirac equations
satisfied by $\phi_j$ and $\chi_j$:
\[
\omega\phi_j=(D_m-\beta f)\phi_j,
\qquad
-\omega\chi_j=(D_m-\beta f)\chi_j,
\qquad
1\le j\le N/2,
\]
where $f=f(\tau)$
is evaluated at
$\tau=v^2-u^2=\phi_1\sp\ast\beta\phi\sb 1$.
\end{remark}

\begin{proposition}\label{prop-invariant-subspace}
The system
\eqref{rel-1},
\eqref{rel-2}
is invariant in the subspace
specified by the relations
\begin{eqnarray}\label{invariant-subspace}
r_j=\bar q_j,
\qquad
s_j=\bar p_j,
\qquad
1\le j\le N/2.
\end{eqnarray}
\end{proposition}

\begin{proof}
We claim that if
$p_j$, $q_j$, $r_j$, and $s_j$, $1\le j\le 2$,
satisfy \eqref{invariant-subspace},
then equations
\eqref{rel-1} and \eqref{rel-2}
yield
\begin{eqnarray}\label{p-t-u-p-t-v}
\p\sb t r_j=\p\sb t\bar q_j,
\qquad
\p\sb t s_j=\p\sb t\bar p_j,
\qquad
1\le j\le N/2.
\end{eqnarray}

\noindent
Multiplying the relation
\eqref{rel-1}
by $\beta$ and coupling with $\phi_j$
(in the $\C^N$-sense; no integration in $x$):
\begin{eqnarray}
\label{a31a}
\jj\p_t p_k
\phi_j\sp\ast\beta\phi\sb k
=
-\jj\phi_j\beta\alpha^i\phi\sb k \p_i p_k
-\jj\phi_j\beta\alpha^i\chi\sb k \p_i q_k
-2f'\Re(\dots)\phi_j\sp\ast\phi_1.
\end{eqnarray}
We took into account
that
$\phi\sb j\sp\ast\beta\chi\sb k=0$
for all
$1\le j,\,k\le N/2$
(cf. Lemma~\ref{lemma-matrix-elements-general} below).
Multiplying the relation \eqref{rel-2}
by $\beta$ and coupling with $\chi_j$:
\begin{eqnarray}
\label{a31b}
\jj\p_t s_k \chi_j\sp\ast\beta\chi\sb k
=
-\jj\chi_j\sp\ast\beta\alpha^i\phi_k \p_i r_k
-\jj\chi_j\sp\ast\beta\alpha^i\chi_k \p_i s_k
-2f'\Re(\dots)\chi_j\sp\ast \chi_1.
\end{eqnarray}
Multiplying the relation
\eqref{rel-1}
by $\beta$ and coupling with $\chi_j$
one has:
\begin{eqnarray}
\label{b31a}
&&
\jj \p_t q_k\chi_j\sp\ast\beta\chi\sb k
-2\omega q_k \chi_j\sp\ast\beta\chi\sb k
\nonumber
\\
&&
=
-\jj\chi_j\beta\alpha^i\phi\sb k \p_i p_k
-\jj\chi_j\beta\alpha^i\chi\sb k \p_i q_k
-2f'\Re(\dots)\chi_j\sp\ast\phi_1.
\end{eqnarray}
Multiplying the relation \eqref{rel-2}
by $\beta$ and coupling with $\phi_j$:
\begin{eqnarray}
\label{b31b}
&&
\jj\p_t r_k \phi_j\sp\ast\beta\phi\sb k
+2\omega r_k \phi_j\sp\ast\beta\phi_k
\nonumber
\\
&&
=
-\jj\phi_j\sp\ast\beta\alpha^i\phi_k \p_i r_k
-\jj\phi_j\sp\ast\beta\alpha^i\chi_k \p_i s_k
-2f'\Re(\dots)\phi_j\sp\ast \chi_1.
\end{eqnarray}
The proof of Proposition~\ref{prop-invariant-subspace}
will follow
if we prove that
\eqref{a31a} and \eqref{a31b}
are complex conjugates of each other,
and that so are
\eqref{b31a} and \eqref{b31b}.

\begin{lemma}\label{lemma-matrix-elements-general}
For $1\le j,k\le N/2$,
\begin{eqnarray}\label{matrix-elements-one}
\phi_j\sp\ast\phi_k=\overline{\chi_j\sp\ast\chi_k},
\qquad
\phi_j\sp\ast\chi_k=\overline{\chi_j\sp\ast\phi_k};
\end{eqnarray}
\begin{eqnarray}\label{matrix-elements-beta}
\phi_j\sp\ast\beta\phi_k
=-\overline{\chi_j\sp\ast\beta\chi_k},
\qquad
\phi_j\sp\ast\beta\chi_k
=0,
\qquad
\chi_j\sp\ast\beta\phi_k=0;
\end{eqnarray}
\begin{eqnarray}\label{matrix-elements-beta-alpha}
\phi_j\sp\ast(-\jj\beta\alpha^i)\phi_k
=\overline{\chi_j\sp\ast(-\jj\beta\alpha^i)\chi_k},
\quad
\phi_j\sp\ast(-\jj\beta\alpha^i)\chi_k
=\overline{\chi_j\sp\ast(-\jj\beta\alpha^i)\phi_k},
\quad
1\le i\le n.
\end{eqnarray}
\end{lemma}

\begin{proof}
We note that
$\sigma_r\sp\ast\sigma_r=\sigma_r\sigma_r\sp\ast=1$
(cf. \eqref{def-sigma-r}, \eqref{general-sigma}).
We have:
\[
\phi_j\sp\ast\phi_k
=
\begin{bmatrix}v R\e_j\\\jj u\sigma_r R\e_j\end{bmatrix}\sp\ast
\begin{bmatrix}v R\e_k\\\jj u\sigma_r R\e_k\end{bmatrix}
=v^2\e_j\sp\ast\e_k+u^2\e_j\sp\ast R\sp\ast\sigma_r\sp\ast\sigma_r R\e_k
=(v^2+u^2)\e_j\sp\ast\e_k,
\]
\[
\chi_j\sp\ast\chi_k
=
\begin{bmatrix}-\jj u\sigma_r\sp\ast S\e_j\\v S\e_j\end{bmatrix}\sp\ast
\begin{bmatrix}-\jj u\sigma_r\sp\ast S\e_k\\v S\e_k\end{bmatrix}
=u^2\e_j\sp\ast S\sp\ast\sigma_r\sigma_r\sp\ast S\e_k+v^2\e_j\sp\ast\e_k
=(v^2+u^2)\e_j\sp\ast\e_k,
\]
\begin{eqnarray}\label{phi-chi}
\phi_j\sp\ast\chi_k
=
\begin{bmatrix}v R\e_j\\\jj u\sigma_r R\e_j\end{bmatrix}\sp\ast
\begin{bmatrix}-\jj u\sigma_r\sp\ast S\e_k\\v S\e_k\end{bmatrix}
=-2\jj v u\e_j\sp\ast R\sp\ast \sigma_r\sp\ast S\e_k,
\end{eqnarray}
\begin{eqnarray}\label{chi-phi}
\chi_j\sp\ast\phi_k
=
\begin{bmatrix}-\jj u\sigma_r\sp\ast S\e_j\\v S\e_j\end{bmatrix}\sp\ast
\begin{bmatrix}v R\e_k\\\jj u\sigma_r R\e_k\end{bmatrix}
=2\jj v u\e_j\sp\ast S\sp\ast \sigma_r R\e_k.
\end{eqnarray}
To show that \eqref{phi-chi} and \eqref{chi-phi}
are complex conjugates of each other,
it suffices to mention the identities
\begin{eqnarray}\label{identities}
\overline{\e_1\sp\ast R^*\sigma_i^* S\e_2}
=
\e_1\sp\ast S^*\sigma_i R\e_2,
\qquad
\overline{\e_2\sp\ast R^*\sigma_i^* S\e_1}
=
\e_2\sp\ast S^*\sigma_i R\e_1,
\qquad
1\le i\le n,
\end{eqnarray}
valid for all $R,\,S\in\mathbf{SU}(2)$.
(Indeed,
$M:=R^\ast\sigma_i\sp\ast S$
satisfies
$M\in\mathbf{U}(2)$,
$\det M=-1$;
for such matrices, one has
$M_{12}=\overline{M_{21}}$.)
This proves relations \eqref{matrix-elements-one}.

\begin{remark}
The relation $M_{j i}=\overline{M_{i j}}$ for $i\ne j$
is no longer true for
$M=R\sp\ast\sigma_i\sp\ast S$
with $\sigma_i$ the equivalent of the Pauli matrix
of size $N/2$
and with $R,\,S\in\mathbf{SU}(N/2)$
with $N>4$,
seemingly
limiting the present approach to four-component spinors.
\end{remark}

We continue:
\begin{eqnarray}
\nonumber
\phi_j\sp\ast\beta\phi_k
&=&
\begin{bmatrix}v R\e_j\\\jj u\sigma_r R\e_j\end{bmatrix}\sp\ast
\begin{bmatrix}1\sb{N/2}&0\\0&-1\sb{N/2}\end{bmatrix}
\begin{bmatrix}v R\e_k\\\jj u\sigma_r R\e_k\end{bmatrix}
\\
\nonumber
&=&v^2\e_j\sp\ast\e_k-u^2\e_j\sp\ast R\sp\ast\sigma_r\sp\ast\sigma_r R\e_k
=(v^2-u^2)\e_j\sp\ast\e_k,
\end{eqnarray}
\begin{eqnarray}
\nonumber
\chi_j\sp\ast\beta\chi_k
&=&
\begin{bmatrix}-\jj u\sigma_r\sp\ast S\e_j\\v S\e_j\end{bmatrix}\sp\ast
\begin{bmatrix}1\sb{N/2}&0\\0&-1\sb{N/2}\end{bmatrix}
\begin{bmatrix}-\jj u\sigma_r\sp\ast S\e_k\\v S\e_k\end{bmatrix}
\\
\nonumber
&=&
u^2\e_j\sp\ast S\sp\ast\sigma_r\sigma_r\sp\ast S\e_k-v^2\e_j\sp\ast\e_k
=(u^2-v^2)\e_j\sp\ast \e_k,
\end{eqnarray}
\[
\phi_j\sp\ast\beta\chi_k
=
\begin{bmatrix}v R\e_j\\\jj u\sigma_r R\e_j\end{bmatrix}\sp\ast
\begin{bmatrix}1\sb{N/2}&0\\0&-1\sb{N/2}\end{bmatrix}
\begin{bmatrix}-\jj u\sigma_r\sp\ast S\e_k\\v S\e_k\end{bmatrix}
=0.
\]
This proves the relations \eqref{matrix-elements-beta}.
Finally, we prove \eqref{matrix-elements-beta-alpha}:
\begin{eqnarray}
\nonumber
\phi_j\sp\ast\beta\alpha^i\phi_k
&=&
\begin{bmatrix}v R\e_j\\\jj u\sigma_r R\e_j\end{bmatrix}\sp\ast
\begin{bmatrix}0&\sigma_i^\ast\\-\sigma_i&0\end{bmatrix}
\begin{bmatrix}v R\e_k\\\jj u\sigma_r R\e_k\end{bmatrix}
\\
\nonumber
&=&
\jj v u \e_j\sp\ast R\sp\ast\sigma_i\sp\ast \sigma_r R\e_k
+
\jj v u \e_j\sp\ast R\sp\ast\sigma_r\sp\ast\sigma_i R\e_k
=
2\jj v u \frac{x^i}{r}\e_j\sp\ast\e_k,
\end{eqnarray}
\begin{eqnarray}
\nonumber
\chi_j\sp\ast\beta\alpha^i\chi_k
&=&
\begin{bmatrix}-\jj u\sigma_r\sp\ast S\e_j\\v S\e_j\end{bmatrix}\sp\ast
\begin{bmatrix}0&\sigma_i^\ast\\-\sigma_i&0\end{bmatrix}
\begin{bmatrix}-\jj u\sigma_r\sp\ast S\e_k\\v S\e_k\end{bmatrix}
\\
\nonumber
&=&
\jj u v \e_j\sp\ast S\sp\ast\sigma_r\sigma_i\sp\ast S\e_k
+
\jj u v \e_j\sp\ast S\sp\ast\sigma_i\sigma_r\sp\ast S\e_k
=
2\jj u v \frac{x^i}{r}\e_j\sp\ast\e_k,
\end{eqnarray}
\begin{eqnarray}
\nonumber
\phi_j\sp\ast\beta\alpha^i\chi_k
&=&
\begin{bmatrix}v R\e_j\\\jj u\sigma_r R\e_j\end{bmatrix}\sp\ast
\begin{bmatrix}0&\sigma_i^\ast\\-\sigma_i&0\end{bmatrix}
\begin{bmatrix}-\jj u\sigma_r\sp\ast S\e_k\\v S\e_k\end{bmatrix}
\\
\nonumber
&=&
v^2\e_j\sp\ast R\sp\ast\sigma_i\sp\ast S\e_k
+
u^2\e_j\sp\ast R\sp\ast
\sigma_r^\ast
\sigma_i
\sigma_r^\ast
S\e_k,
\end{eqnarray}
\begin{eqnarray}
\nonumber
\chi_j\sp\ast\beta\alpha^i\phi_k
&=&
\begin{bmatrix}-\jj u\sigma_r\sp\ast S\e_j\\v S\e_j\end{bmatrix}
\sp\ast
\begin{bmatrix}0&\sigma_i^\ast\\-\sigma_i&0\end{bmatrix}
\begin{bmatrix}v R\e_k\\\jj u\sigma_r R\e_k\end{bmatrix}
\\
&=&
-v^2\e_j\sp\ast S\sp\ast\sigma_i R\e_k
-
u^2\e_j\sp\ast S\sp\ast
\sigma_r
\sigma_i\sp\ast
\sigma_r
R\e_k.
\end{eqnarray}
To argue that the last two lines are anti- complex conjugates,
we note that
\[
\sigma_r\sp\ast\sigma_i\sigma_r\sp\ast
=
\sigma_r\sp\ast\Big(2\frac{x_i}{r}-\sigma_r\sigma_i\sp\ast\Big)
=
2\sigma_r\sp\ast\frac{x_i}{r}-\sigma_i\sp\ast,
\qquad
1\le i\le n,
\]
and then use the same reasoning as above
(when showing that \eqref{phi-chi} and \eqref{chi-phi}
are complex conjugates,
basing on the identities \eqref{identities}).
\end{proof}

Lemma~\ref{lemma-matrix-elements-general}
proves \eqref{p-t-u-p-t-v},
finishing the proof of
Proposition~\ref{prop-invariant-subspace}.
\end{proof}

\begin{remark}\label{remark-no-mixing}
We note that
the relation \eqref{no-frequency-mixing} is satisfied
in the invariant subspace
described in Proposition~\ref{prop-invariant-subspace},
and thus there is no \emph{frequency mixing} in this subspace:
given $\psi$ of the form
\eqref{linearization-ansatz}
with
$r_j=\bar q_j$,
$s_j=\bar p_j$,
$1\le j\le N/2$,
the expression $\psi\sp\ast\beta\psi$
does not contain terms with the factors
$e^{\pm 2\jj\omega t}$.
\end{remark}

We introduce the following functions:
\begin{eqnarray}\label{def-varphi}
\varPhi_j(x)=\frac{1}{v(r)}\phi_j(x),
\qquad
\Chi_j(x)=\frac{1}{v(r)}\chi_j(x),
\qquad
1\le j\le N/2;
\end{eqnarray}
at each $x\in\R^n$,
these functions form a basis in $\C^N$.

\begin{lemma}\label{lemma-lin-independent}
Let \eqref{u-less-v} be satisfied.
Then $\varPhi_j(x)$, $1\le j\le N/2$,
and
$\Chi_j(x)$, $1\le j\le N/2$,
are linearly independent, uniformly in $x$.
\end{lemma}

\begin{proof}
For any $1\le j\le N/2$ and any $x\in\R^n$,
$\norm{\varPhi_j(x)}\sb{\C^N}+\norm{\Chi_j(x)}\sb{\C^N}
\le C<\infty$
since $\abs{u(r)/v(r)}\le c<1$
by \eqref{u-less-v},
while
$\det[\varPhi_1\,\varPhi_2\,\Chi_1\,\Chi_2]$
is given by
\[
\det
\left[
\begin{pmatrix}\e_1\\\jj\frac{u}{v}\sigma_r\e_1\end{pmatrix}
\begin{pmatrix}\e_2\\\jj\frac{u}{v}\sigma_r\e_2\end{pmatrix}
\begin{pmatrix}-\jj\frac{u}{v}\sigma_r\sp\ast\e_1\\\e_1\end{pmatrix}
\begin{pmatrix}-\jj\frac{u}{v}\sigma_r\sp\ast\e_2\\\e_2\end{pmatrix}
\right]
\]
\[
=
\det
\begin{bmatrix}
I_2&-\jj\frac{u}{v}\sigma_r\sp\ast
\\
\jj\frac{u}{v}\sigma_r&I_2
\end{bmatrix}
=
\det
\Big(
I_2-\frac{u^2}{v^2}\sigma_r\sp\ast\sigma_r
\Big);
\]
in the last equality,
one can use the Schur complement
to compute
the determinant
of a matrix written in the block form.
Using \eqref{u-less-v}
and taking into account that
$\sigma_r\sp\ast\sigma_r=I_2$,
one concludes that the right-hand side of the above is separated from zero
uniformly in $x\in\R^n$.
\end{proof}

The perturbation of a bi-frequency solitary wave
could be rewritten as follows (cf. \eqref{linearization-ansatz}):
\begin{eqnarray}\label{linearization-ansatz-1}
\psi(t,x)
=
a\Big(\phi_1(x)+\sum\sb{j=1}\sp{N/2}\big(P_j(t,x)\varPhi_j(x)+Q_j(t,x)\Chi_j(x)\big)\Big)
e^{-\jj\omega t}
\nonumber
\\
+\,
b\Big(\chi_1(x)+\sum\sb{j=1}\sp{N/2}\big(\bar Q_j(t,x)\varPhi_j(x)+\bar P_j(t,x)\Chi_j(x)\big)\Big)
e^{\jj\omega t}.
\end{eqnarray}
Taking into account
\eqref{def-varphi},
we note that $P_j$ and $Q_j$ in the above formula
differ from $p_j$ and $q_j$ in \eqref{linearization-ansatz}
by the factor of $v(r)$:
\[
P_j(t,x)= v(r) p_j(t,x),
\qquad
Q_j(t,x)= v(r) q_j(t,x),
\qquad
1\le j\le N/2.
\]

We claim that at the initial moment
there is a unique way to decompose the perturbation
$f\in L^2(\R^n,\C^N)$
over the terms in \eqref{linearization-ansatz-1}
with factors $e^{\pm\jj\omega t}$:

\begin{lemma}\label{lemma-r-s}
Let $a,\,b\in\C$,
$\abs{a}^2-\abs{b}^2=1$,
and let \eqref{u-less-v}
be satisfied.
Then for any $\varrho\in L^2(\R^n,\C^N)$,
there is a unique choice of
scalar functions
$(P_j,Q_j)\sb{1\le j\le N/2}\in L^2(\R^n,\C)^N$
such that
\begin{eqnarray}\label{psi-zero-f}
a\sum\sb{j=1}\sp{N/2}\big(P_j\varPhi_j+Q_j\Chi_j\big)
+
b\sum\sb{j=1}\sp{N/2}\big(\bar Q_j\varPhi_j+\bar P_j\Chi_j\big)
=
\varrho.
\end{eqnarray}
The map
$
L^2(\R^n,\C^N)
\to L^2(\R^n,\C)^N,
$
$
\varrho\mapsto (P_j,Q_j)\sb{1\le j\le N/2}
$\,,
is continuous.
\end{lemma}

\begin{proof}
Let $\varrho\in L^2(\R^n,\C^N)$.
By Lemma~\ref{lemma-lin-independent},
there are
$f_j\in L^2(\R^n,\C)$
and
$g_j\in L^2(\R^n,\C)$
be such that
$
\varrho=\sum\sb{j=1}\sp{N/2}\varPhi_j f_j+\sum\sb{j=1}\sp{N/2}\Chi_j g_j,
$
and
the map
\[
L^2(\R^n,\C^N)\to L^2(\R^n,\C)^N,
\qquad
\varrho\mapsto (f_j,g_j)\sb{1\le j\le N/2}\in L^2(\R^n,\C)^N
\]
is continuous.
Equation \eqref{psi-zero-f} takes the form
\begin{eqnarray}\label{psi-zero-f-1}
a P_j(x)+b\bar Q_j(x)=f_j(x),
\qquad
a Q_j(x)+ b \bar P_j(x)=g_j(x),
\qquad
1\le j\le N/2.
\end{eqnarray}
Since $\abs{b/a}<1$,
for any $(f_j,g_j)\sb{1\le j\le N/2}\in L^2(\R^n,\C)^N$
the map
\[
(P_j,Q_j)
\mapsto
\Big(
\frac 1 a(f_j-b\bar Q_j),\,\frac 1 a(g_j-b\bar P_j)
\Big),
\qquad
1\le j\le N/2,
\]
is a contraction in $L^2(\R^n,\C)^N$
and thus has a unique fixed point
(a solution to \eqref{psi-zero-f-1})
which continuously depends on $(f_j,g_j)$.
\end{proof}



We can now conclude the proof of Theorem~\ref{theorem-equiv}
in the case $n\le 4$, $N=4$.
By Proposition~\ref{prop-invariant-subspace}
and Lemma~\ref{lemma-r-s},
any solution to the linearization
at the bi-frequency solitary wave
\[
\theta(t,x)
=
a\phi_1(x) e^{-\jj\omega t}
+b\chi_1(x)e^{\jj\omega t},
\]
where
\[
\phi_1(x)
=
\begin{bmatrix}v(r)\xi\\\jj u(r)\xi\end{bmatrix},
\qquad
\chi_1(x)
=
\begin{bmatrix}-\jj u(r)\eta
\\
v(r)\eta
\end{bmatrix},
\]
\[
\xi,\,\eta\in\C^{N/2},
\quad\abs{\xi}=\abs{\eta}=1,
\qquad
a,\,b\in\C,
\quad
\abs{a}^2-\abs{b}^2=1
\]
(cf. \eqref{def-phi-chi-j})
can be written in the form
\begin{eqnarray}
\nonumber
\psi(t,x)
=
a\Big(
\phi_1(x)+\sum\sb{j=1}\sp{N/2}
\big(p_j(t,x)\phi_j(x)+q_j(t,x)\chi_j(x)\big)
\Big)e^{-\jj\omega t}
\\
\nonumber
+b\Big(\chi_1(x)
+\sum\sb{j=1}\sp{N/2}
\big(\bar q_j(t,x)\phi_j(x)+\bar p_j(t,x)\chi_j(x)\big)
\Big)e^{\jj\omega t};
\end{eqnarray}
this expression solves the nonlinear Dirac equation
in the zero and first order of the linear perturbation.
Taking in this Ansatz $a=1$ and $b=0$,
we see that
\[
\psi(t,x)
=
\Big(
\phi_1(x)+\sum\sb{j=1}\sp{N/2}
\big(p_j(t,x)\phi_j(x)+q_j(t,x)\chi_j(x)\big)
\Big)e^{-\jj\omega t}
\]
also solves the nonlinear Dirac equation
in the zero and first order of the perturbation.

We conclude that the bi-frequency solitary wave solution
\eqref{bsw}
to the nonlinear Dirac equation \eqref{nld}
(or the Dirac--Klein--Gordon system \eqref{dkg})
is linearly unstable if and only if the one-frequency solitary wave
$\phi_{\xi}(x)e^{-\jj\omega t}$
is linearly unstable.
\end{proof}




\bibliographystyle{sima-doi}
\bibliography{bibcomech}

\end{document}